\newcommand{\N}{\mathbb{N}}
\newcommand{\E}{\mathbf{E}}
\newcommand{\Rb}{\mathbb{R}}
\newcommand{\T}{\mathcal{T}}
\newcommand{\Rf}{\mathcal{R}}
\newcommand{\cak}{\mathfrak{K}}
\newcommand{\Argmax}[1]{\underset{#1}{\textrm{argmax}}}
\newcommand{\Max}[1]{\underset{#1}{\textrm{max}}\  } 
\newcommand{\Sup}[1]{\underset{#1}{\textrm{sup}}}
\newcommand{\Inf}[1]{\underset{#1}{\textrm{inf}}}
\newcommand{\Min}[1]{\underset{#1}{\textrm{min}}\ }
\newcommand{\DX}[1]{\ \mathbf{d} #1}
\newcommand{\Rred}[1]{\textcolor{red}{#1}}
\newcommand{\ALT}[1]{}
\newcommand{\FNS}[1]{}
\newcommand{\MSPP}{{\tt SPC}}
\newcommand{\SAMP}{{\tt SAMP}}
\newcommand{\sd}{{\Gamma}}
\renewcommand{\ee}{\varepsilon}
\newcommand{\mujw}{\mu_{j,w}}
\newcommand{\mujv}{\mu_{j,v}}
\newcommand{\Djw}{\caD_{j,w}}
\newcommand{\Djv}{\caD_{j,v}}
\newcommand{\omuj}{\ovl{\mu_j}}
\newcommand{\omu}{\ovl{\mu}}
\newcommand{\hmu}{\hat{\mu}}
\newcommand{\hcaD}{\hat{\caD}}
\newcommand{\supp}{{\tt supp}}
\newcommand{\htau}{\hat{\tau}}
\newcommand{\KURZ}[2]{#1}
\begin{document}

\title{Improving Statistical Privacy by Subsampling\thanks{
   This research has been conducted within the AnoMed project (https://anomed.de/)
    funded by the BMBF  (German Bundesministerium für Bildung und Forschung)
   and the European Union in the NextGenerationEU action.}}

\titlerunning{Improving Statistical Privacy by Subsampling}

\author{Dennis Breutigam \and R\"udiger Reischuk}

\authorrunning{D. Breutigam, R\"udiger Reischuk}
%
\institute{Institut f\"ur Theoretische Informatik, Universit\"at zu L\"ubeck,
L\"ubeck, Germany \\
\email{\{d.breutigam,ruediger.reischuk\}@uni-luebeck.de}}

\maketitle

\begin{abstract}
Differential privacy (DP) considers a scenario, where an adversary 
has almost complete information about the entries of a database
This worst-case assumption is likely to overestimate the privacy thread for an individual in real life.
Statistical privacy (SP) denotes a setting where only the distribution of the database entries is known
to an adversary, but not their exact values.
In this case one has to analyze the  interaction between  noiseless privacy based on the entropy of 
distributions and privacy mechanisms that distort the answers of queries, which
can be quite complex.

A privacy mechanism often used is to take samples of the data for answering a query.
This paper proves precise bounds how much different methods of sampling 
increase privacy in the statistical setting
with respect to database size and sampling rate.
They allow us to deduce when and how much sampling provides an improvement
and how far this depends on the privacy parameter $\ee$. To perform these
investigations we develop a framework to model sampling techniques.

For the DP setting tradeoff functions have been proposed as a finer measure for privacy 
compared to $(\ee,\dd)$-pairs.
We apply these tools to statistical privacy with subsampling to get a
comparable characterization
\keywords{privacy \and sampling \and tradeoff function}
\end{abstract}



\section{Introduction}
Many machine learning algorithms, such as stochastic optimization methods and
Bayesian inference algorithms include sampling operations. Due to the increasing
demand for such learning algorithms, especially for sensitive data, there have
been recent efforts to investigate the influence of sampling methods with respect to
privacy \cite{BBG18,WBK19,BBG20,IC21,S22}.
However, the standard model \emph{differential privacy} makes
extremely strong assumptions about the power of an adversary
that from answers of queries to a database tries to get information of single entries.
To guarantee privacy in this case a strong distortion of the answers is necessary --
thus the quality deteriorates siginificantly
  \cite{JD22,D21}.
For many real life applications one should consider less powerful adversaries
in the hope to improve the utility of the data.
The prior information, also called background knowledge is limited.
Such settings has been propsed and named 
\emph{noiseless privacy, (inference based) distributional differential privacy} 
and \emph{passive partial knowledge differential privacy} \cite{BBG11,BGKS13,DMK20}.
Such a model requires a more complex mathematical analysis than a worst case scenario.

This paper consider a situation, where only the distribution of the database 
entries is publicly known, but not their exact values nor any further information,
which we call \emph{Statistical privacy} \cite{BR25}.
To consider different types of sampling methods are considered for increasing
statistical privacy we develop a general framework using \emph{sampling
templates}. Combining this with Markov process theory we analyze how the distributions
generated effect privacy and define \emph{statistical privacy curves}. Then for
the analysis we use tools developed by Balle et. al. in \cite{BBG18} and Markov
process theory to bound the $\alpha$-divergence. This establishes precise bounds for the
amplification of privacy by sampling techniques depending on database size and
sampling rate. In addition to the analytical estimation we plots are provided for
different parameter settings that illustrate the improvement, in particular with
respect to the privacy parameter $\ee$. 

For the DP setting tradeoff functions and the notion of $f$-differential privacy
have been proposed by Dong et. al in \cite{DRS22} as a finer measure for
privacy. We apply these tools to statistical privacy with subsampling to get a
comparable characterization.

Our main results are the following: \vspace*{-1.5ex}
\begin{itemize}
    \item a general framework for sampling techniques
    \item a simple analytical formular for the amplification by sampling without
    replacement
    \item an analysis for Poission sampling in the statistical setting that handles
    different sample sizes
    \item managing dependencies appearing in case of sampling with replacement
\end{itemize}

This paper is organized as follows.
In the next section formal definitions.
%
In section \ref{section:subsampling_techniques} we develop the framework for
sampling mechanisms and introduce sampled privacy curves.
%
%
The bounds for the different sampling techniques are proven in 
section \ref{section:subsampling_results}. In the next section tradeoff
functions are transferred to statistical privacy with subsampling. The paper
concludes with.... 

An appendix provides further details and an introduction to the basics of stochastic process
theory used in our proofs.

\section{Preliminaries}\label{section:statistical_privacy}
In this paper we will use the following notions.

\begin{definition}
A database  \wfinw{$D$} of size $n$  is a sequence of $n$ independent entries $  I_1,\ldots,I_n$
where $I_j$ is taken from a set $W$ of possible values for entries.
$I_j$ is distributed according to a distribution $\omuj$ on $W$.
These marginal distributions may be different, but their support has to be identical --
thus we assume that it is equal to $W$.
They imply a distribution, resp.~density function  \wfinw{$ \mu$} on $W^n$.
This is modeled by a random variable $\caD$ on $W^n$ that is distributed according to $\mu$ ($D \sim \mu$).

By $\mujw$ and $\Djw$ we denote the conditional distribution, resp.~random variable where $I_j$ is fixed
to a constant value $w \in W$.



Let $\caF$ be a set of queries that may be asked for a given database.
Formally, this is described by measurable functions \wfinw{$   F: \; W^n \mapsto A$},
where $A$ denotes an appropriate  set of possible answers.
We assume that $W$ and $A$ are totally ordered and call a query monotone if
$D \le \hat{D}$ implies $F(D) \le F(\hat{D})$ where the partial order on $W^n$ is defined by
$I_j \le \hat{I}_j$ for all $j$.

A privacy technique $\caM$ (also called mechanism) 
is a function that maps database queries to random variables on $A$,
that means for a given database $D$ and query $F$, $\caM(F,D)$ 
is a random variable on $A$ that disturbs the correct answer $F(D)$.
If the database itself is a random variable $\caD$ then $\caM(F,\caD)$ 
contains two types of randomness, 
the internal uncertainty of $\caD$ and the external noise imposed by $\caM$.
\wkast
\end{definition}


To measure how much privacy of  entries is lost by a query $F$ one considers
a critical entry $j$  and estimates how much information about its value $I_j$ 
an adversary can deduce from the answer.
The most challenging task would be to reconstruct $I_j$ or part of it.
Privacy investigations typically consider a much simpler decision problem, namely
to differentiate between two neighboring databases, namely whether  
 the entry  $I_j$  equals $v$ or $w$  for  values in $v,w \in  W$
 while keeping all other entries unchanged.
 Alternatively, instead of changing the $j$-th entry one could remove it
 to get a different notion of neighborhood.
 But this does not make much difference. 
When studying sampling techniques it is more convenient not to change
the database size, thus we will only consider replacement of the
$j$-th entry by a different value.


\begin{definition}{\rm\wfinw{Differential Privacy (DP)}\cite{Dwo06}\\
For  a collection   $\caW$ of databases a
privacy technique $M$ achieves $(\ee,\dd)$-differential privacy for a query $F$
if for all neighboring databases $D,D'$ in $\caW$ and  for all $S \subseteq A$ 
it holds
\[  \Prob{M(F,D) \in S}{} \ \kla  e^\ee \ \Prob{M(F,D') \in S}{}   + \delta  \ .   \]
}\end{definition}

To protect privacy one should not allow queries that differentiate between entries.
Furthermore, sampling only makes sense if the random order in which elements are drawn
does not matter.
If one requires that queries $F$ are symmetric this  holds.
But in the distributional setting one can be more general.
We consider functions of the form
\[    F(x_1,\ldots,x_n)  \gla h(g_1(x_1), \ldots, g_n(x_n)) \ ,  \]
where $h: \; W' \to A$ for some set $W'$ is symmetric, but the $g_j: \; W \to W'$ can be arbitrary.
The distribution $\mu$ for vectors $(x_1,\dots,x_n) \in W^n$ induces a distribution $\mu'$ for 
$(y_1,\ldots,y_n) \in W'^{\;n}$ where $y_i=g_i(x_i)$.
Thus, such a query $F$ on $W^n$ with distribution $\mu$ is equivalent to a symmetric query
$F'=h(y_1,\ldots,y_n)$ on $W'^{\;n}$ with distribution $\mu'$. 

Taking samples from a sequence may generate arbitrary orderings of elements. 
Reordering a sample should have no influence. 
Therefore,  a distance measure for database distributions and samples should be invariant against permutations.


\begin{definition}{\rm \wfinw{Distributional Distance}\label{DD}\\
For distributions $ \mu = \omu_1 \times \ldots \times \omu_n$ and
 $\nu = \ovl{\nu}_1 \times \ldots \times \ovl{\nu}_n$ of databases 
their distance   is defined as
\[      \sd(\mu, \nu) \dea
  \min_{\pi \in \Pi_n} \vert \{j \mid \omu_j \neq \ovl{\nu}_{\pi(j)}  \} \vert \ ,     \]
where $\Pi_n$ denotes the set of all permutations on $n$ elements. 

 If the sizes $n,n'$ of the databases differ the distance $\sd(\mu, \nu)$ is computed by
 minimizing over all injective functions $\pi$ from the smaller index set to the larger one
 and adding $n-n'$ to the value obtained.
}\end{definition}

\FNS{
\Rred{das sind schon technique:}
For a set $A$ the space of all $A$--valued random variables if denoted by $\Rf(A)$. 
A \emph{database query} is a function $F:W^n \to A$ from the database space $W^n$ and
answer space $A$ is a measurable randomized function which for any database $d
\in W^n$ outputs a sample from the distribution $F(d)$. By $\caK_F$ we denote
the Markov kernel corresponding to $F$, an operator such that for some database $\caD \sim
\mu$ the distribution of $F(\caD)$ is $\mu \caK_F$. In section
\ref{section:markov_kernel_and_coupling} this conept will be defined in depth.
\begin{definition}{\rm \bfinw{Database Queries}}\\
Let $A$ denote a set of possible answers when querying a database.
A query is a function $F: \; W^n \to \Rf(A)$ is called symmetric if for all $d \in W^n$
     and all permutations $\sigma$ of the entries of $d$ it holds $F(d) \gla F(\sigma(d))$.
\end{definition}
}

\FNS{
\Rred{$e$ ist eine schlechte Wahl:}
To define our notion of privacy, it is necessary to consider distributions of a
database $\caD \sim \mu$ where different entries are fixed. For this consider a
position $0<j\leq n$ and an entry $e \in W$ then the database with $e$ fixed on
position $j$ is defined as 
\begin{equation*}
    \caD_{j,e} := \caD\mid_{\caD_j = e} \; \sim \mu_{j,e}.
\end{equation*}
If the position is clear from the context or if an applied query is symmetric
the position can be omitted. As example for this structure and distributional
distance consider  $\aa \neq \bb \in W$ then $\sd(\caD_{1,\alpha}, \; \caD_{1,\beta}) = 1$.
}

\FNS{ NOTWENDIG?? \\
Further if more than one entry is fixed this can be denoted by a
list $I_k = \N^k \times D^k$ such that 
\begin{equation*}
    \caD_{I_k} \dea \caD\mid_{(\caD)_{(I_k)_{1,1}} = (I_k)_{1,2}, (\caD)_{(I_k)_{2,1}} = (I_k)_{2,2},..., (\caD)_{(I_k)_{k,1}} = (I_k)_{k,2}}.
\end{equation*}}

\begin{definition} $ $ \\ {\rm
Let  $\mu,\hmu$  be distributions and $F$ a query.
The the corresponding \wfinw{Privacy Loss Random Variable} (PLRV) \cite{DN03}
   \FNS{ \Rred{$\caD \sim \mu, \caD' \sim \hmu$}} 
 is a real-valued function defined for  $a \in A$ as
    \[ \caL^{F}_{\mu,\hmu}(a) \dea \ln \ \frac{\mu \caK_F (a)}{\hmu \caK_F (a)} \  ,\]
where $\caK_F$ denotes the Markov kernel corresponding to $F$
and $\ln \frac{0}{0} := 0$ and $\ln \frac{>0}{0} := \infty$.
This operator maps a distribution $\mu$ on $W^n$ to the distribution on $A$ that is generated by $F$.

Using the PLRV, a distance measure for $\mu \caK_F$ and $\hmu \caK_F$ 
called \bfinw{privacy curve} \cite{BBG18} has been defined as
\begin{eqnarray*}
\DD_{F,\mu, \hmu}(\ee) 
  &:=&  \int_A \mu   \caK_F (a) \cdot \max \lk \; 0, \; 1-\exp(\ee-\caL^{F}_{\mu,\hmu}(a)) \rk  \DX{a} \\
  &=&  \int_A \max   \lk 0, \; \mu   \caK_F (a) - e^{\ee} \; \hmu \caK_F (a) \rk \DX{a}     \\           
   &=&   \max_{S \sse A} \  \int_S   \mu   \caK_F (a) - e^{\ee} \; \hmu \caK_F (a)  \DX{a}               
     \qq \text{where $\ee \ge 0$.}  
\end{eqnarray*}
     
Instead of distributions we may also use random variables $\caD$ with $ \caD \sim \mu$ and define
the distance betweren these random variables with respect to $F$ by
$ \DD_{F,\caD, \hcaD} \dea \DD_{F,\mu, \hmu}$.
}\end{definition}

This distance is also known as $\alpha$--divergence  with $\aa=e^\ee$, 
see for example \cite{BBG18}.

\FNS{
\begin{definition}{\rm \wfinw{Statistical Privacy}\cite{BR25}\label{SP}\\
    A query $F$ on database $\caD$ is called \bfinw{$(\ee, \dd)$-statistical private}
    if for all  $j \in [1\ldots n]$ and $\alpha, \beta \in D$ holds
    \[ \DD_{F, \caD_{j,\alpha},\caD_{j,\beta}}(\ee) \kla \dd \]
}\end{definition}
}

\begin{definition}{\rm\wfinw{Statistical Privacy (SP)}\cite{BR25}\\
A privacy technique $M$ achieves  $(\ee,\dd)$-statistical privacy 
with respect to a distribution $\mu$ 
(or a collection of distributions) and a query $F$ 
if (for all $\mu$) for every subset $S \sse A$, every entry $j$ and 
all $v,w  \in W$  it  holds for $\caD \sim \mu$:
\[  \Prob{M(F, \caD_{j,v}) \in S}{} \ \kla  e^\ee \ \Prob{M(F,\caD_{j,w}) \in S}{}   + \delta  \ .   \]

The statistical privacy curve  with respect to $\mu$ and $F$ generated by $M$ is the function 
$\Phi_{\mu,F,M}$, where $\Phi_{\mu,F,M}(\ee)$  denote the smallest $\dd$ such that $M$ 
for $F$ and $\mu$ achieves $(\ee,\dd)$-statistical privacy.
In case that no operator $M$ is used 
we simply write $\Phi_{\mu,F}$ --
thus privacy is only generated by the distribution $\mu$.
}\end{definition}

\FNS{
 \begin{definition}{\rm \bfinw{simple-$F$-divergent}\\
     Let $F$ be a query, we call a database $\caD \sim \mu$ \emph{simple-$F$-divergent}
     with respect to $F$ if for all databases $\caD' \sim \hmu$ build from entry
     distributions of $\caD$ of size $k > 1$ it holds that for all $\ee \ge 0$, $0
     < j \leq k$ and $\alpha, \beta$  there exists an $\hat{a} \in \Rb$ such that
     the set 
     \begin{equation*}
         S' = \Argmax{S \subseteq A} \int_S  \mu_{j,\alpha}' \caK_{F}(a) - e^\ee \mu_{j,\beta}' \caK_{F}(a)  \DX{a}
     \end{equation*}
     is equal to $(\infty, \hat{a}]$, $[\hat{a},\infty)$ or $\emptyset$. 
  }\end{definition}
}

In the following we assume that $W$ is completely ordered and consider the partial order on $W^n$
defined by $D \le D'$ iff for all $i$ holds $D_i \le D'_i$.
The answer set $A$ for  queries is restricted to real numbers and a query $F$ is monotone if
$D \le D'$ implies $F(D) \le F(D')$.

\section{Subsampling}\label{section:subsampling_techniques}

Taking a sample from a groundset can be done in different ways -- a fixed size
sample drawn with or without replacement, or with varying size as in Poisson
sampling. Given a function $F$ with $n$ arguments to estimate $F(x_1,\dots,x_n)$
using a sample $y_1,\dots,y_m$ one needs a family of functions $\caF = (F_m(y_1, \dots, y_m))_{m \in
\N}$ that approximate $F$. As already discussed above this only make sense for
symmetric functions, in particular since a sample may occur in an arbitrary
order. Thus in the following when we talk about a query $F$ we actually mean
the family $\caF$ and depending of the size of the sample the appropriate member. 
Classical examples for such families are the mean or the median of a sequence of entries.
On the other hand, functions like min or max are not reasonable candidates 
for sampling unless special restrictions are put on the distributions.
If a distribution is not smooth, but there are very few entries that 
determine the extrema it is unlikely that such an element is drawn
and therefore one cannot hope for an acceptable approximation.

\begin{definition}{\rm \wfinw{Subsampling}
\label{definition:subsampling_technique}\\
   Given a sequence $X=x_1,\ldots,x_n$ of fixed length $n$,
   a sample $Y=y_1,\ldots,y_m$ of $X$  where $y_j = x_{i_j}$ can be described by the sequence of
    indices $i_j$   called a \emph{sampling template} $\tau$.
    Let $\mathfrak{C}^m$ denote the set of all such templates of  size $m \ge 0$ and 
    $\mathfrak{C}$ the union of all $\mathfrak{C}^m$.
 The subset of $\mathfrak{C}$ where an entry $j$ is drawn exactly $k$ times is denoted by 
$\mathfrak{C}_{j,k} \dea \left\{ \tau \in \mathfrak{C} \mid \#\{ i \mid \tau_i =
j \} = k \right\} $   and $\mathfrak{C}_{j,+} \dea \cup_{k = 1}^\infty \mathfrak{C}_{j,k}$.

For templates $\tau,\htau$ of identical size the relation $\tau \approx_j \htau$ is defined by the condition:
$\tau_i \ne j$ implies $\tau_i = \htau_i$ and    $\tau_i = j$ implies $\htau_i \ne j$.
 
 
For fixed length $n$, each specific sampling technique  corresponds to a 
random variable $\caT \sim \mu_\caT$ on $\mathfrak{C}$.
Conditioning  $\caT$ on the  number of times a specific entry is drawn will be denoted by  
{$\caT_{j,k} \sim \mu_\caT \mid   \mathfrak{C}_{j,k}$} and conditioning $\caT$
on the size of subsampled database will be denoted by {$\caT^m \sim \mu_\caT \mid \mathfrak{C}^{m}$.} 
Furthermore conditioning $\caT$ on the event that a
specific entry $j$ is drawn at least once will be denoted by 
$\caT_{j,+} \sim \mu_\caT \mid  \mathfrak{C}_{j,+}$ 
and the negation by {$\caT_{j,-} \sim \mu_\caT \mid \mathfrak{C}_{j,0}$}. 
If  $j$ is clear from the context we will simply write $\caT_+$ and $\caT_-$.
    
    
 For a template $\tau$  of length $m$, the operator   $\SAMP_\tau: W^n \to W^m$ maps 
 databases of size $n$ to databases of size $m$, which  can de described by a Markov kernel $\caK_\tau$.
 Combining it with a  sampling technique $\caT$ we get  an operator $\SAMP_\caT$ from $W^n$ to $W^\star$ with kernel
     \[ \caK_{\caT} \gla \int_{\mathfrak{C}} \caK_\tau  \DX{\mu_\caT(\tau)} \ . \]
}\end{definition}

   
%
    
For example, drawing a random subsample of size $m$ without replacement
corresponds to the uniform  distribution on the set of all injective $\tau$ of size $m$.
For Poisson sampling templates of all sizes between $0$ and $n$ can occur.
For each size $m$ all injective templates have the same probability.

\begin{lemma}
For a a database distribution $\mu$, a query   $F$, a sampling
technique $\caT$ and $S \subseteq A$ it holds
        \[  \mu \; \caK_\T \; \caK_F (S) \gla  \sum_{\tau} \; \mu_\caT (\tau) \; \mu
        \; \caK_\tau \; \caK_F (S) \ .   \]
\end{lemma}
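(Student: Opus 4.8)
The plan is to read the identity probabilistically as a conditioning on the sampling template and then to realize this by unfolding the definition of the sampling kernel $\caK_\caT$ together with the linearity of Markov-kernel composition. Recall from Definition~\ref{definition:subsampling_technique} that the sampling operator is the mixture
\[ \caK_\caT = \int_{\mathfrak{C}} \caK_\tau \DX{\mu_\caT(\tau)} \]
of the individual template kernels $\caK_\tau$, weighted by the sampling law $\mu_\caT$. Since feeding a distribution into a kernel and composing two kernels are both linear operations, I would first push the input distribution $\mu$ through this mixture, obtaining $\mu \caK_\caT = \int_{\mathfrak{C}} \mu \caK_\tau \DX{\mu_\caT(\tau)}$, and then compose on the right with the query kernel $\caK_F$ to get $\mu \caK_\caT \caK_F = \int_{\mathfrak{C}} \mu \caK_\tau \caK_F \DX{\mu_\caT(\tau)}$.

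Next I would evaluate both sides on the measurable answer set $S \subseteq A$. Because each template is a finite sequence of indices drawn from $\{1,\ldots,n\}$, the space $\mathfrak{C} = \bigcup_{m \geq 0} \mathfrak{C}^m$ is a countable union of finite sets and hence countable; consequently $\mu_\caT$ is a discrete probability distribution and the integral over $\mathfrak{C}$ collapses to the sum over templates, yielding exactly
\[ \mu \caK_\caT \caK_F(S) = \sum_{\tau} \mu_\caT(\tau) \; \mu \caK_\tau \caK_F(S) . \]

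The only step that genuinely needs justification is the interchange of the $\tau$-integration with the right-composition by $\caK_F$ and with evaluation at $S$. I expect this Fubini-type swap to be the main (and essentially only) obstacle, but it is benign here: every integrand $\mu \caK_\tau \caK_F(S)$ is a non-negative sub-probability of $S$, so the exchange is licensed by Tonelli's theorem, or equivalently by monotone convergence over the countable index set $\mathfrak{C}$, with no integrability side-condition required. Everything else is a direct unfolding of the kernel definitions, and the resulting formula is just the statement that the probability of landing in $S$ after sampling-then-querying is the expectation, over the random choice of template, of the corresponding fixed-template probability.
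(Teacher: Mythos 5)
Your proposal is correct and follows essentially the same route as the paper's proof: unfold the mixture definition of $\caK_\caT$, use non-negativity to justify the Tonelli-type interchange of the $\tau$-integration with the integration over $W^\ast$ implicit in composing with $\caK_F$ and evaluating at $S$, and then collapse the integral over the countable template space $\mathfrak{C}$ into a sum. The only difference is presentational — the paper writes the swap as explicit iterated integrals, while you phrase it via linearity of kernel operations — so there is nothing to add.
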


\begin{proof}
    \KURZ{    \begin{align*}
        &\mu \; \caK_\T \; \caK_F (S) =  \int_{W^\ast} \caK_F(S,w) \DX{\mu \caK_\caT (w)}
 \gla \int_{W^\ast} \caK_\caT \caK_F(S,w) \DX{\mu (w)}  \\
        &= \int_{W^\ast} \int_{\mathfrak{C}} \caK_\tau \caK_F(S,w) \DX{\mu_\caT(\tau)} \DX{\mu (w)} 
     \gla \int_{\mathfrak{C}} \int_{W^\ast}  \caK_\tau \caK_F(S,w)  \DX{\mu (w)} \DX{\mu_\caT(\tau)} \\
        &= \int_{\mathfrak{C}} \mu \caK_\tau \caK_F(S) \DX{\mu_\caT(\tau)} 
        \gla\sum_{\tau} \; \mu_\caT (\tau) \; \mu \; \caK_\tau \; \caK_F (S) \ .
    \end{align*}}{    \begin{align*}
        \mu \; \caK_\T \; \caK_F (S) &=  \int_{W^\ast} \caK_F(S,w) \DX{\mu \caK_\caT (w)}
 \gla \int_{W^\ast} \caK_\caT \caK_F(S,w) \DX{\mu (w)}  \\
        &= \int_{W^\ast} \int_{\mathfrak{C}} \caK_\tau \caK_F(S,w) \DX{\mu_\caT(\tau)} \DX{\mu (w)} 
     \gla \int_{\mathfrak{C}} \int_{W^\ast}  \caK_\tau \caK_F(S,w)  \DX{\mu (w)} \DX{\mu_\caT(\tau)} \\
        &= \int_{\mathfrak{C}} \mu \caK_\tau \caK_F(S) \DX{\mu_\caT(\tau)} 
        \gla\sum_{\tau} \; \mu_\caT (\tau) \; \mu \; \caK_\tau \; \caK_F (S) \ .
    \end{align*}}

\end{proof}

\begin{lemma}\label{lemma:bound_by_coupling}
    For databases $\caD, \hcaD$, 
    subsampling techniques  $\SAMP_\T, \SAMP_{\T'}$ and a coupling
    $\nu_{\T,\T'}$ of $\T, \T'$ it holds
    \begin{eqnarray*}
        \DD_{F,\SAMP_\T(\caD),\SAMP_{\T'}(\hat{\caD})}(\ee)  
        &\le& \sum_{\tau, \tau'} \nu_{\T,\T'}(\tau,\tau') \ 
        \DD_{F,\SAMP_\tau(\caD),\SAMP_{\tau'}(\hat{\caD})}(\ee) 
    \end{eqnarray*}
\end{lemma}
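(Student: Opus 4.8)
The plan is to exploit the variational (set-based) form of the privacy curve, namely $\DD_{F,\mu,\hmu}(\ee) = \max_{S\subseteq A}\int_S \mu\caK_F(a) - e^\ee\,\hmu\caK_F(a)\DX{a}$, together with the mixture decomposition from the previous lemma. Writing $\caD\sim\mu$ and $\hcaD\sim\hmu$, the sampled variables $\SAMP_\T(\caD)$ and $\SAMP_{\T'}(\hcaD)$ are distributed as $\mu\caK_\T$ and $\hmu\caK_{\T'}$, so by the previous lemma $\mu\caK_\T\caK_F(S)=\sum_\tau \mu_\T(\tau)\,\mu\caK_\tau\caK_F(S)$ and likewise $\hmu\caK_{\T'}\caK_F(S)=\sum_{\tau'}\mu_{\T'}(\tau')\,\hmu\caK_{\tau'}\caK_F(S)$. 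Since $\nu_{\T,\T'}$ is a coupling of $\T,\T'$, its marginals are $\mu_\T$ and $\mu_{\T'}$, that is $\mu_\T(\tau)=\sum_{\tau'}\nu_{\T,\T'}(\tau,\tau')$ and $\mu_{\T'}(\tau')=\sum_{\tau}\nu_{\T,\T'}(\tau,\tau')$. First I would use these two identities to re-express both mixtures as a single sum over pairs $(\tau,\tau')$ weighted by $\nu_{\T,\T'}(\tau,\tau')$.

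Next, I would fix an arbitrary measurable $S\subseteq A$ and use linearity of the integral to obtain
\[
\int_S \mu\caK_\T\caK_F(a)-e^\ee\hmu\caK_{\T'}\caK_F(a)\DX{a}
 = \sum_{\tau,\tau'}\nu_{\T,\T'}(\tau,\tau')\int_S \mu\caK_\tau\caK_F(a)-e^\ee\hmu\caK_{\tau'}\caK_F(a)\DX{a}\ .
\]
Because the weights $\nu_{\T,\T'}(\tau,\tau')$ are nonnegative, each inner integral over the fixed set $S$ can be bounded from above by the maximum over all subsets, which by the variational form equals $\DD_{F,\SAMP_\tau(\caD),\SAMP_{\tau'}(\hcaD)}(\ee)$. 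This yields, for every fixed $S$,
\[
\int_S \mu\caK_\T\caK_F(a)-e^\ee\hmu\caK_{\T'}\caK_F(a)\DX{a}
 \le \sum_{\tau,\tau'}\nu_{\T,\T'}(\tau,\tau')\,\DD_{F,\SAMP_\tau(\caD),\SAMP_{\tau'}(\hcaD)}(\ee)\ .
\]
Finally, taking the supremum over $S$ on the left-hand side turns it into $\DD_{F,\SAMP_\T(\caD),\SAMP_{\T'}(\hcaD)}(\ee)$, which gives the claim.

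The step I expect to carry the weight of the argument is the passage from the common-$S$ integral to the per-pair maxima: the inequality (rather than equality) enters precisely because the set $S$ that is optimal for the mixture need not be optimal for each individual pair $(\tau,\tau')$, whereas the right-hand side grants every pair its own optimizer. This is exactly the joint convexity of the hockey-stick divergence, here phrased through couplings. The only technical care needed is to justify interchanging the (countable) sum over templates with the integral over $A$, which is legitimate by nonnegativity via Tonelli, and to confirm that the numerator mixture $\mu\caK_\T$ and the denominator mixture $\hmu\caK_{\T'}$ are both expanded against the same joint weights $\nu_{\T,\T'}$; this is guaranteed by the two marginal conditions of the coupling.
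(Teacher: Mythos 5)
Your proposal is correct and follows essentially the same route as the paper: the identical coupling-marginal decomposition of both mixtures $\mu\caK_{\T}\caK_F$ and $\hmu\caK_{\T'}\caK_F$ against the joint weights $\nu_{\T,\T'}$, followed by the same joint-convexity step. The only difference is presentational --- the paper applies subadditivity of $\max(0,\cdot)$ pointwise under the integral, while you invoke the variational form $\max_{S\subseteq A}\int_S(\cdot)$ and bound the per-pair integrals over a common $S$ by their individual maxima; these are equivalent characterizations already identified as equal in the paper's definition of the privacy curve.
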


\begin{proof}   
    This can be calculated directly by
    \KURZ{    \begin{eqnarray*}
        &&\DD_{F,\SAMP_\T(\caD),\SAMP_{\T'}(\hat{\caD})}(\ee)
        = \int_A \max ( \; 0, \; \mu \caK_{\T} \caK_{F}(a) - e^\ee \; \hat{\mu}
        \caK_{\T'} \caK_{F} (a) )  \DX{a}   \\
        &=& \int_A \max \left( \; 0, \; \sum_{\tau, \tau'} \nu_{\T,\T'}(\tau,\tau')  \mu
        \caK_{\tau} \caK_{F}(a) - e^\ee \cdot \sum_{\tau, \tau'} \nu_{\T,\T'}(\tau,\tau') \hat{\mu}
        \caK_{\tau'} \caK_{F} (a) \right) \DX{a}  \\
        &\leq& \sum_{\tau, \tau'} \nu_{\T,\T'}(\tau,\tau')  \int_A \max \left( \; 0, \; \mu
        \caK_{\tau} \caK_{F}(a) - e^\ee \cdot \hat{\mu} \caK_{\tau'} \caK_{F} (a) \right)  \DX{a}   \\
        &=& \sum_{\tau, \tau'} \nu_{\T,\T'}(\tau,\tau') \DD_{F,\SAMP_\tau(\caD),\SAMP_{\tau'}(\hat{\caD})}(\ee) 
    \end{eqnarray*}}{    \begin{eqnarray*}
        &&\DD_{F,\SAMP_\T(\caD),\SAMP_{\T'}(\hat{\caD})}(\ee) \\
        &=& \int_A \mu \caK_{\T} \caK_{F} (a) \cdot
        \max ( \; 0, \; 1-\exp(\ee-\caL_{\mu \caK_{\T} \caK_{F},\hat{\mu} \caK_{\T'}
        \caK_{F}}(a))) \DX{a}   \\
        &=& \int_A \max ( \; 0, \; \mu \caK_{\T} \caK_{F}(a) - e^\ee \; \hat{\mu}
        \caK_{\T'} \caK_{F} (a) )  \DX{a}   \\
        &=& \int_A \max \left( \; 0, \; \sum_{\tau, \tau'} \nu_{\T,\T'}(\tau,\tau')  \mu
        \caK_{\tau} \caK_{F}(a) - e^\ee \cdot \sum_{\tau, \tau'} \nu_{\T,\T'}(\tau,\tau') \hat{\mu}
        \caK_{\tau'} \caK_{F} (a) \right) \DX{a}  \\
        &\leq& \sum_{\tau, \tau'} \nu_{\T,\T'}(\tau,\tau')  \int_A \max \left( \; 0, \; \mu
        \caK_{\tau} \caK_{F}(a) - e^\ee \cdot \hat{\mu} \caK_{\tau'} \caK_{F} (a) \right)  \DX{a}   \\
        &=& \sum_{\tau, \tau'} \nu_{\T,\T'}(\tau,\tau') \DD_{F,\SAMP_\tau(\caD),\SAMP_{\tau'}(\hat{\caD})}(\ee) 
    \end{eqnarray*}}
\end{proof}

In the SP setting privacy is generated by the uncertainty of an adversary 
about the exact values of the entries.
When taking a random sample of entries the sensitive entry may occur more
than once in case of drawing with replacement or the size of the
sample may vary in case of Poisson sampling.
But smaller size samples have less entropy.
%
Therefore we have to modify the notion of privacy curve by taking the expectation 
over all possible samples. 
This is legitimate since an adversary does not know which sample is actually drawn.

\begin{definition}{\rm \bfinw{Sampling Privacy Curve (\MSPP)}}\label{definition:mspp}\\
For a database $\caD$, a query $F$ and a sampling technique 
$\caT$ the sampling privacy curves are defined by
    \begin{eqnarray*}
       \MSPP_{F,\caD,\T}^j(\ee) &:=& \E_{\tau \sim \T_{j,+}}
       \left[    \Max{v,w \in W} \DD_{F, \ \SAMP_\tau(\Djv),\ \SAMP_{\tau} (\Djw)}(\ee) \right] \ , \\
        \MSPP_{F,\caD,\T}(\ee) &:=&     \Max{j} \ \MSPP_{F,\caD,\T}^j (\ee) \;.
     \end{eqnarray*}
\end{definition}

\vspace*{2ex}

\begin{lemma}\label{lemma:bound_by_mspp}
For $\caD,F,  \nu_{\T,  \hat{\T}}$ as above and $v,w \in \supp(\caD_j)$ it holds
\begin{eqnarray*}
    \DD_{F, \ \SAMP_\T(\caD_{j,v}),  \ \SAMP_{\hat{\T}} (\caD_{j,w})}(\ee) &\leq& \MSPP_{F,\caD,\nu_{\T, \hat{\T}}}^j(\ee) \ .
\end{eqnarray*}
    \end{lemma}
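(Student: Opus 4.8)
The plan is to start from the coupling bound of Lemma~\ref{lemma:bound_by_coupling} and then simplify the resulting sum in two stages: first discarding the coupled template pairs that never draw the sensitive entry $j$, and then passing from the raw coupling $\nu_{\T,\hat{\T}}$ to its conditioning on $\mathfrak{C}_{j,+}$. Concretely, applying Lemma~\ref{lemma:bound_by_coupling} with $\caD=\Djv$ and $\hat{\caD}=\Djw$ gives
\[
\DD_{F,\SAMP_\T(\Djv),\SAMP_{\hat{\T}}(\Djw)}(\ee)\ \le\ \sum_{\tau,\tau'}\nu_{\T,\hat{\T}}(\tau,\tau')\ \DD_{F,\SAMP_\tau(\Djv),\SAMP_{\tau'}(\Djw)}(\ee)\ ,
\]
so it suffices to bound the right-hand sum by $\MSPP_{F,\caD,\nu_{\T,\hat{\T}}}^j(\ee)$.

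The first key observation is that every summand indexed by a pair not drawing $j$ vanishes. Since $\ee\ge 0$ gives $1-e^\ee\le 0$, for any distribution $\rho$ one has
\[
\DD_{F,\rho,\rho}(\ee)=\int_A\max\!\lk 0,\ (1-e^\ee)\,\rho\,\caK_F(a)\rk\DX{a}=0\ .
\]
When $(\tau,\tau')\in\mathfrak{C}_{j,0}$ the two subsampled databases $\SAMP_\tau(\Djv)$ and $\SAMP_{\tau'}(\Djw)$ are identically distributed: the value forced onto the untouched position $j$ is irrelevant and the remaining draws coincide, so the corresponding divergence equals $\DD_{F,\rho,\rho}(\ee)=0$. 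Hence the sum may be restricted to pairs in $\mathfrak{C}_{j,+}$.

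It then remains to compare the restricted sum with the conditional expectation defining $\MSPP$. For $(\tau,\tau')\in\mathfrak{C}_{j,+}$ we have $\nu_{\T,\hat{\T}}(\tau,\tau')=\nu_{\T,\hat{\T}}(\mathfrak{C}_{j,+})\cdot(\nu_{\T,\hat{\T}}\mid\mathfrak{C}_{j,+})(\tau,\tau')\le(\nu_{\T,\hat{\T}}\mid\mathfrak{C}_{j,+})(\tau,\tau')$ because $\nu_{\T,\hat{\T}}(\mathfrak{C}_{j,+})\le 1$, and every divergence is non-negative, so
\[
\sum_{(\tau,\tau')\in\mathfrak{C}_{j,+}}\nu_{\T,\hat{\T}}(\tau,\tau')\ \DD_{F,\SAMP_\tau(\Djv),\SAMP_{\tau'}(\Djw)}(\ee)\ \le\ \E_{(\tau,\tau')\sim\nu_{\T,\hat{\T}}\mid\mathfrak{C}_{j,+}}\!\lk \DD_{F,\SAMP_\tau(\Djv),\SAMP_{\tau'}(\Djw)}(\ee)\rk\ .
\]
Finally, for the fixed pair $v,w$ the integrand is at most $\Max{v,w\in W}\DD_{F,\SAMP_\tau(\Djv),\SAMP_{\tau'}(\Djw)}(\ee)$, since that one choice is among those maximized; taking the expectation of this bound yields exactly $\MSPP_{F,\caD,\nu_{\T,\hat{\T}}}^j(\ee)$, which is the claim.

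The step I expect to be the main obstacle is the vanishing argument: one must argue carefully that a pair in $\mathfrak{C}_{j,0}$ really produces identically distributed samples, which relies on the coupling keeping the non-sensitive draws of $\tau$ and $\tau'$ aligned (as in the $\approx_j$-based couplings used later), so that the value at position $j$ drops out of $\SAMP_\tau(\Djv)$ and $\SAMP_{\tau'}(\Djw)$ simultaneously. The remaining manipulations — the identity $\DD_{F,\rho,\rho}(\ee)=0$, dropping the factor $\nu_{\T,\hat{\T}}(\mathfrak{C}_{j,+})\le 1$, and overbounding by the maximum over $v,w$ — are routine and use only non-negativity of the divergence.
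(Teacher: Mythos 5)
Your first and last steps are exactly the paper's proof: it applies Lemma~\ref{lemma:bound_by_coupling} to $\Djv,\Djw$ and then bounds each summand $\DD_{F,\SAMP_\tau(\Djv),\SAMP_{\tau'}(\Djw)}(\ee)$ pointwise by $\Max{v,w \in W}\DD_{F,\SAMP_\tau(\Djv),\SAMP_{\tau'}(\Djw)}(\ee)$, implicitly reading the (overloaded) symbol $\MSPP^j_{F,\caD,\nu_{\T,\hat{\T}}}(\ee)$ as the \emph{unconditioned} expectation $\E_{(\tau,\tau')\sim\nu_{\T,\hat{\T}}}\left[\Max{v,w\in W}\DD_{F,\SAMP_\tau(\Djv),\SAMP_{\tau'}(\Djw)}(\ee)\right]$ over the coupling. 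Under that reading the lemma holds for every coupling, and your two middle stages (discarding the $\mathfrak{C}_{j,0}$-pairs and renormalizing via $\nu_{\T,\hat{\T}}(\mathfrak{C}_{j,+})\le 1$) are simply unnecessary.

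Those middle stages come from reading $\MSPP^j_{F,\caD,\nu_{\T,\hat{\T}}}$ as conditioned on drawing $j$, by analogy with Definition~\ref{definition:mspp}; and under that reading the vanishing step is a genuine gap, exactly where you suspected. The lemma is stated for an arbitrary coupling as in Lemma~\ref{lemma:bound_by_coupling}, so nothing forces a pair $(\tau,\tau')$ in which neither template draws $j$ to satisfy $\tau=\tau'$; since the paper allows the marginals $\omu_i$ to differ across entries, such a pair can have $\SAMP_\tau(\Djv)$ and $\SAMP_{\tau'}(\Djw)$ with genuinely different distributions (say $\tau$ hits only entries with law $\omu_1$ and $\tau'$ only entries with law $\omu_2\neq\omu_1$), so its divergence does not vanish. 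Moreover this cannot be repaired by a cleverer argument: with $F$ a sum query, a coupling that pairs the $j$-drawing templates diagonally (their maximal divergence is then of order $1/\sqrt{m}$) and pairs the remaining mass across two blocks of entries with very different Bernoulli parameters makes the left-hand side bounded away from zero while the conditioned right-hand side tends to zero, so the conditioned version of the statement is false for general couplings. Hence you must either adopt the unconditioned reading (and delete the middle of your proof), or add as a hypothesis the alignment property you invoke, namely that pairs avoiding $j$ produce identically distributed samples, as the $\approx_j$-couplings used later in the paper indeed do; in the paper's own applications the couplings put all their mass on pairs drawing $j$, so there the two readings coincide and the distinction never surfaces.
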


\begin{proof}
Let $\caD \sim \mu$, this is a direct consequence of Lemma
\ref{lemma:bound_by_coupling} and the $\MSPP$ definition
\KURZ{}{
    \begin{eqnarray*}
    &&   \DD_{F, \ \SAMP_\T(\caD_{j,v}),  \ \SAMP_{\hat{\T}} (\caD_{j,w})}(\ee) \\
     &\leq&\E_{(\tau, \hat{\tau}) \sim \nu{\T, \hat{\T}}} \left[ \DD_{F, \ \SAMP_\tau(\caD_{j,v}),  \ 
           \SAMP_{\hat{\tau}} (\caD_{j,w})}(\ee) \right]  \kla \MSPP_{F,\caD,\nu_{\T, \hat{\T}}}^j(\ee) \ . 
\end{eqnarray*}
}
    \end{proof}

   \begin{lemma}\label{lemma:delta_mittelwertsatz}
    For a distribution $\mu$, a query $F$ and $v \in \supp(\omuj)$ it holds 
  \[    \DD_{F,\mujv, \mu}(\ee) \kla \Max{w \in \supp( \omuj)} \DD_{F,\mujv, \mujw}(\ee) \ .      \]
   \end{lemma}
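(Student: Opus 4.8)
The plan is to exploit the fact that the unconditioned distribution $\mu$ is precisely the mixture of the conditioned distributions $\mujw$ taken over $w$ drawn from the marginal $\omuj$. Since the entries of the database are independent, conditioning on $\caD_j = w$ only replaces the $j$-th marginal by the point mass at $w$ and leaves the remaining factors unchanged; averaging this over $w \sim \omuj$ restores the original $j$-th marginal. Hence
\[ \mu \;=\; \E_{w \sim \omuj}\!\left[\mujw\right], \]
and by linearity of the Markov kernel $\caK_F$ this identity is preserved after applying the query, giving $\mu\caK_F(a) = \E_{w\sim\omuj}[\mujw\caK_F(a)]$ for every answer $a$.

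Next I would start from the integral representation of the privacy curve,
\[ \DD_{F,\mujv,\mu}(\ee) = \int_A \max\!\bigl(0,\; \mujv\caK_F(a) - e^\ee\, \mu\caK_F(a)\bigr)\DX{a}, \]
and substitute the mixture identity into the subtracted term. For each fixed answer $a$ the map $x \mapsto \max(0,\, \mujv\caK_F(a) - e^\ee x)$ is a maximum of two affine functions and is therefore convex in $x$. This is the point where the argument actually happens: applying Jensen's inequality to this convex map with the expectation over $w$ pulls the expectation outside the maximum,
\[ \max\!\bigl(0,\, \mujv\caK_F(a) - e^\ee\,\E_w[\mujw\caK_F(a)]\bigr) \;\le\; \E_w\!\bigl[\max(0,\, \mujv\caK_F(a) - e^\ee\,\mujw\caK_F(a))\bigr]. \]
Integrating over $a$ and exchanging the integral with the expectation --- justified by Tonelli's theorem since the integrand is nonnegative --- yields $\DD_{F,\mujv,\mu}(\ee) \le \E_{w\sim\omuj}[\DD_{F,\mujv,\mujw}(\ee)]$. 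Finally, an expectation never exceeds its supremum, so the right-hand side is at most $\Max{w \in \supp(\omuj)}\DD_{F,\mujv,\mujw}(\ee)$, which is the claim.

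The computation is short, so the only things that need genuine care are the two structural facts rather than any hard estimate. The first is the mixture identity $\mu = \E_{w\sim\omuj}[\mujw]$: it must be checked that conditioning and averaging over the correct marginal $\omuj$ really reproduces $\mu$, which rests on the independence of entries assumed in the database model. The second is verifying the convexity that powers Jensen --- here it is essential that the averaged distribution sits in the \emph{second} (subtracted) argument of $\DD$, since $\max(0,\, c - e^\ee x)$ is convex in $x$, whereas a corresponding statement for the first argument would require a different treatment. I expect no obstacle beyond confirming these two points together with the applicability of Tonelli's theorem.
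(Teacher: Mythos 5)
Your proposal is correct and follows essentially the same route as the paper's proof: the mixture identity $\mu \caK_F(a) = \int_W \mujw\caK_F(a)\,\omuj(w)\DX{w}$ is exactly the law-of-total-probability step the paper uses, your Jensen step for the convex map $x \mapsto \max(0, c - e^\ee x)$ is the paper's inequality pulling $\max(0,\cdot)$ inside the integral over $w$, and the Tonelli swap plus bounding the expectation by the maximum (using $\int_W \omuj(w)\DX{w}=1$) matches the paper's concluding lines. No gaps.
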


 \begin{proof}
For the density function  $\mu \caK_F$  of $F(\caD)$ the law of total
probability gives
\[         \mu \caK_F(a) \gla \int_{W} \mu_{j,w}\caK_F(a) \; \cdot \; \ovl{\mu_j}(w) \; \DX{w} \ .   \]
Thus,
\KURZ{
    \begin{eqnarray*}
        &&\DD_{F,\mujv, \mu}(\ee) 
        = \int_A \max \left( \; 0, \; \mu_{j,v}\caK_F(a)   - e^\ee  \; \mu \caK_F (a) \right) \ \DX{a} \\
        &=& \int_A \max \left( \; 0, \; \int_{W}  \ovl{\mu_j}(w) \; \mu_{j,v}\caK_F(a) \; \DX{w} -  e^{\ee}
        \int_{W} \mu_{j,w}\caK_F(a) \; \cdot \; \ovl{\mu_j}(w) \; \DX{w} \; \right) \DX{a} \\
        &\le& \int_{W} \ovl{\mu_j}(w) \int_A \max \left( \; 0, \;  \mu_{j,v}\caK_F(a) -  e^{\ee} \; \mu_{j,w}\caK_F(a)  \right) 
             \DX{a}  \;\DX{w} \\
        &\le& \Max{w \in \supp( \omuj)} \underbrace{\left\{  \int_A \max \left( \; 0, \; \mu_{j,v}\caK_F(a) -  e^{\ee} \; \mu_{j,w}\caK_F(a) \right)
           \DX{a}  \right\}}_{ =  \DD_{F,\mujv, \mujw}(\ee)} \ \underbrace{\int_{W}  \omuj (w) \; \DX{w}}_{=1} \ . \\
     \end{eqnarray*}
}{
    \begin{eqnarray*}
        &&\DD_{F,\mujv, \mu}(\ee) \\
        &=& \int_A \max \left( \; 0, \; \mu_{j,v}\caK_F(a)   - e^\ee  \; \mu \caK_F (a) \right) \ \DX{a} \\
        &=& \int_A \max \left( \; 0, \; \int_{W}  \ovl{\mu_j}(w) \; \mu_{j,v}\caK_F(a) \; \DX{w} -  e^{\ee}
        \int_{W} \mu_{j,w}\caK_F(a) \; \cdot \; \ovl{\mu_j}(w) \; \DX{w} \; \right) \DX{a} \\
        &\le& \int_A \int_{W} \ovl{\mu_j}(w) \; \max \left( \; 0, \;  \mu_{j,v}\caK_F(a) -  e^{\ee} \; \mu_{j,w}\caK_F(a) \right) 
            \DX{w} \;  \DX{a} \\
        &\le& \int_{W} \ovl{\mu_j}(w) \int_A \max \left( \; 0, \;  \mu_{j,v}\caK_F(a) -  e^{\ee} \; \mu_{j,w}\caK_F(a)  \right) 
             \DX{a}  \;\DX{w} \\
        &\le& \Max{w \in \supp( \omuj)} \underbrace{\left\{  \int_A \max \left( \; 0, \; \mu_{j,v}\caK_F(a) -  e^{\ee} \; \mu_{j,w}\caK_F(a) \right)
           \DX{a}  \right\}}_{ =  \DD_{F,\mujv, \mujw}(\ee)} \ \underbrace{\int_{W}  \omuj (w) \; \DX{w}}_{=1} \ . \\
     \end{eqnarray*}
}
   \end{proof}

When changing a single entry in a database, 
drawing with replacement can generate samples
that differ at more than 1 position, 
namely when the critical entry $j$ is drawn several times, let us say $k$ times.
For the following analysis we need a smoothness condition for sampling.

\begin{definition}{\rm  $F$-samplable\\
With respect to a query $F$, a database distribution $\mu$ is called 
\emph{$F$-samplable} if for all sampling
templates $\tau, \htau \in \mathfrak{C}^n$, $v,w \in W$ 
and $\ee \ge 0$ a set $S \sse A$ that maximizes the integral
\[    \int_S  \mujv \caK_\tau \caK_{F}(a) - e^\ee \;
    \mujw \caK_{\htau} \caK_{F}(a)  \ \DX{a}  \]
can be chosen as a half open real interval $(\infty, \hat{a}]$, 
$[\hat{a},\infty)$ or the empty set.
}\end{definition}
    
For example, for arbitrary monotone queries $F$, 
one gets a $F$-samplable distribution
if the individual distributions of the entries are binomial, normal
or Laplace distributed.


\begin{lemma}\label{lemma:delta_extendet_mittelwertsatz}
Let $F$ be a monotone query, $\caD \sim \mu$ a $F$-samplable distribution, 
and $\tau, \htau \in \mathfrak{C}^n$ two sampling templates 
with $\tau \approx_j \htau$.
Then for $v \in W$ holds
    \begin{eqnarray*}
       \DD_{ F, \ \SAMP_\tau(\Djv),\ \SAMP_{\htau} (\caD)}(\ee) &\leq&
       \Max{w \in W} \DD_{F, \ \SAMP_\tau(\Djv),\ \SAMP_{\tau} (\Djw)}(\ee) \ . 
    \end{eqnarray*}
 \end{lemma}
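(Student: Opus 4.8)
The plan is to reduce the claimed inequality to a single comparison of the two ``denominator'' distributions, exploiting that the ``numerator'' distribution $\SAMP_\tau(\Djv)$ is literally the same on both sides. First I would extract the structural content of $\tau \approx_j \htau$: at every position $i$ we have $\htau_i \ne j$, so $\htau$ never draws the critical entry. Writing $P = \{i \mid \tau_i = j\}$, the two templates agree off $P$, whereas on $P$ the template $\tau$ reads the fixed $j$-th entry and $\htau$ reads some other, free entry. Since the sample produced by $\htau$ does not depend on the value of entry $j$ and the entries are independent, this yields the identity $\mu \caK_{\htau}\caK_F = \mujw \caK_{\htau}\caK_F$ for every $w \in W$; in particular the second argument of the left-hand privacy curve is unchanged by fixing entry $j$, which lets me invoke $F$-samplability in exactly the form in which it is stated.

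Next I would use the set-form of the privacy curve together with $F$-samplability to choose a maximizing set $S$ for $\DD_{F,\SAMP_\tau(\Djv),\SAMP_{\htau}(\caD)}(\ee)$ that is a half-interval $[\hat a,\infty)$, $(-\infty,\hat a]$, or $\emptyset$ (the empty case makes the left-hand side $0$ and is trivial). For that \emph{same} $S$ and any $w$, since $S$ is merely feasible for the right-hand curve,
\[
   \DD_{F,\SAMP_\tau(\Djv),\SAMP_\tau(\Djw)}(\ee) \ \ge\ \int_S \mujv\caK_\tau\caK_F(a) - e^\ee\,\mujw\caK_\tau\caK_F(a)\,\DX{a}\,.
\]
Subtracting the left-hand value $\int_S \mujv\caK_\tau\caK_F(a) - e^\ee\,\mu\caK_{\htau}\caK_F(a)\,\DX{a}$ from this, the numerator terms cancel, so it remains to exhibit a single $w$ with
\[
   \int_S \mujw\caK_\tau\caK_F(a)\,\DX{a}\ \le\ \int_S \mu\caK_{\htau}\caK_F(a)\,\DX{a}\,.
\]

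The heart of the proof is then a monotone coupling. I would realize both samples on the common randomness of the free entries $(I_\ell)_{\ell \ne j}$, which is legitimate precisely because $\htau$ ignores entry $j$: off $P$ the two samples coincide, while on $P$ the sample of $\SAMP_\tau(\Djw)$ equals the constant $w$ and that of $\SAMP_{\htau}(\caD)$ equals $I_{\htau_i}$. If $S = [\hat a,\infty)$ is an up-set I pick $w$ to be the smallest element of $W$, so coordinatewise $\SAMP_\tau(\Djw) \le \SAMP_{\htau}(\caD)$; monotonicity of $F$ gives $F(\SAMP_\tau(\Djw)) \le F(\SAMP_{\htau}(\caD))$, hence $\{F(\SAMP_\tau(\Djw)) \ge \hat a\} \subseteq \{F(\SAMP_{\htau}(\caD)) \ge \hat a\}$ and the displayed inequality follows; for a down-set $S = (-\infty,\hat a]$ I pick $w$ to be the largest element of $W$ symmetrically. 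Combining the last three displays yields $\DD_{F,\SAMP_\tau(\Djv),\SAMP_{\htau}(\caD)}(\ee) \le \DD_{F,\SAMP_\tau(\Djv),\SAMP_\tau(\Djw)}(\ee) \le \Max{w \in W}\DD_{F,\SAMP_\tau(\Djv),\SAMP_\tau(\Djw)}(\ee)$.

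I expect the main obstacle to be the case where entry $j$ is drawn several times, $|P| > 1$: then the positions of $P$ receive \emph{distinct} free values under $\htau$, so the naive total-probability decomposition that suffices for $|P| = 1$ (and underlies Lemma \ref{lemma:delta_mittelwertsatz}) breaks down, which is exactly what forces the coupling-plus-monotonicity route above and what makes both the monotonicity of $F$ and $F$-samplability indispensable. A secondary technical nuisance is that the smallest or largest element of $W$ need not exist (e.g. for normal or Laplace entries); in that case I would take $w$ approaching the relevant end of $W$ and pass to the limit, using that $\Max{w \in W}$ is a supremum and that the coupling gives $\{I_{\htau_i} \ge w \ \forall i \in P\}$ probability tending to $1$.
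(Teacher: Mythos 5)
Your proof is correct, and it shares the paper's skeleton: use $F$-samplability to take a maximizing set $S'$ for the left-hand curve that is a half-interval (the empty case being trivial), note that the first argument $\SAMP_\tau(\Djv)$ is the same on both sides, and thereby reduce the lemma to exhibiting one $w$ with $\mujw\caK_\tau\caK_F(S') \kla \mu\caK_{\htau}\caK_F(S')$. The difference lies in how that comparison is established. The paper works at the distribution level: by total probability it lower-bounds $\mu\caK_{\htau}\caK_F(S')$ by a minimum over fixed values substituted, one position at a time, where $\htau$ reads free entries, and then invokes symmetry together with monotonicity of $F$ to argue that the minimizing vector $(w_1,\dots,w_k)$ may be taken with identical components, which identifies the bound with $\Min{w \in W}\mujw\caK_\tau\caK_F(S')$. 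You instead build an explicit monotone coupling on the shared randomness $(I_\ell)_{\ell\ne j}$ (legitimate precisely because $\tau \approx_j \htau$ forces $\htau$ never to read entry $j$) and choose $w$ extremal in $W$, so that coordinatewise dominance plus monotonicity of $F$ gives the required event inclusion outright. Your route buys three things the paper glosses over: it avoids the symmetry-based ``identical components'' step entirely; it handles the case where $W$ has no smallest or largest element (for normal or Laplace entries the paper's $\Min{w \in W}$ is not attained), via a limit argument that indeed closes, since the probability that some free entry drawn at a critical position falls below $w$ tends to $0$ and enters the bound only multiplied by $e^\ee$; and it makes explicit why $F$-samplability --- which is stated for pairs $\mujv\caK_\tau\caK_F$ and $\mujw\caK_{\htau}\caK_F$ --- applies to the left-hand curve at all, namely through the identity $\mu\caK_{\htau} = \mujw\caK_{\htau}$. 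What the paper's iterated-conditioning argument buys in exchange is that it stays within the same total-probability technique as Lemma~\ref{lemma:delta_mittelwertsatz}, of which it is presented as the natural extension, without introducing a coupling space.
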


 \begin{proof}
    \KURZ{
        Consider the distributional distance $k=\sd(\mujv \caK_{\tau}, \ \mu
        \caK_{\hat{\tau}})$ , for $k=0$ the $j$--th entry does not affect the privacy
        curve therefore the inequality holds and $k=1$ is handled by the previous
        lemma.
        For $k > 1$ w.l.o.g. consider $w_1,\ldots,w_\ell \in W^\ell$ such that $F_{w_1,\dots, w_\ell}(x_1, \dots, x_m) \dea  F(w_1, \dots,
        w_\ell, x_{\ell+1}, \dots, x_m)$.
        }{
        Let  $k=\sd(\mujv \caK_{\tau}, \ \mu \caK_{\hat{\tau}}) $ denote the distance between the two distributions. 
        For $k=0$ a symmetric query does not give any difference.
        The case $k=1$ is handled by the previous lemma.
        For  $k > 1$ we may assume that the entries where the distributions $\mujv \caK_{\tau}$
            and $\mu \caK_{\hat{\tau}}$ differ are the first $k$ ones. 
            For a sequence $w_1,\ldots,w_\ell \in W^\ell$ let
        \[            F_{w_1,\dots, w_\ell}(x_1, \dots, x_m) \dea  F(w_1, \dots, w_\ell, x_{\ell+1}, \dots, x_m) \ . \]
        }
Let $S' \sse A$ be a set that maximizes
    \begin{eqnarray*}
    \DD_{F, \; \SAMP_\tau(\Djv),\; \SAMP_{\hat{\tau}} (\caD)}(\ee)
    &=& \Max{S \subseteq A} \int_S  \mujv \caK_{\tau}
    \caK_{F}(a)-e^\ee \; \mu \caK_{\hat{\tau}} \caK_{F}(a)  \DX{a}.
    \end{eqnarray*}
Since $\mu$ is $F$-samplable $S'$ is equal to either $(\infty, \hat{a}]$,
$[\hat{a},\infty)$ or $\emptyset$ for some $\hat{a} \in \Rb$. 
In case of $\emptyset$ the inequality obviously holds.
Consider the case  $S' = [\hat{a},\infty)$--  the other one can be dealt with in
an analogous way. 
\KURZ{ To prove the inequality we have to show $\mu \caK_{\hat{\tau}}
\caK_{F}([\hat{a}, \infty)) \geq \mujw \caK_{\tau}
\caK_{F}([\hat{a}, \infty))$}{ To prove the inequality we have to show 
\begin{eqnarray*}
 \int_{\hat{a}}^\infty  \mu \caK_{\hat{\tau}} \caK_{F}(a)  \DX{a}  &\geq& \int_{\hat{a}}^\infty  \mujw \caK_{\tau}
   \caK_{F}(a) \DX{a}.
\end{eqnarray*}}
\KURZ{Since $F$ is monotone there exists a $w \in W$ such that
    $\E [F(\SAMP_{\htau}(\caD))] \geq \E   [F_w(\SAMP_{\htau}(\caD))]$.
    This gives
    \begin{eqnarray*}
        \int_{\hat{a}}^\infty \mu \caK_{\hat{\tau}} \caK_{F}(a)  \DX{a} &=& 
        \int_{W^m} \mu \caK_{\hat{\tau}}(x)
        \mathbf{1}_{[\hat{a},\infty)}(F(x))  \DX{x} \\
        &\geq& \Min{w \in W} \int_{W^m} \mu \caK_{\hat{\tau}}(x)
        \mathbf{1}_{[\hat{a},\infty)}(F_{w}(x))  \DX{x} \ .
        \end{eqnarray*}
}{By using the monotonicity of $F$ one can find a $w \in W$ such that
$\E [F(\SAMP_{\htau}(\caD))] \geq \E   [F_w(\SAMP_{\htau}(\caD))]$.
This gives 
\begin{eqnarray*}
\int_{\hat{a}}^\infty \mu \caK_{\hat{\tau}} \caK_{F}(a)  \DX{a} &=& \int_{W^m}
\mu \caK_{\hat{\tau}}(x) \mathbf{1}_{[\hat{a},\infty)}(F(x))  \DX{x} \\
&\geq& \Min{w \in W} \int_{W^m} \mu \caK_{\hat{\tau}}(x)
\mathbf{1}_{[\hat{a},\infty)}(F_{w}(x))  \DX{x} \ .
\end{eqnarray*}}
\KURZ{By iteration it holds 
\begin{eqnarray*}
    \mu \caK_{\hat{\tau}} \caK_{F}([\hat{a},\infty)) 
    &\geq& \Min{(w_1, \dots, w_k)  } \int_{W^m} \mu \caK_{\hat{\tau}}(x)
    \mathbf{1}_{[\hat{a},\infty)}(F_{w_1,\dots, w_k}(x))  \DX{x} 
\end{eqnarray*}
For a symmetric function $F$ the the vector $(w_1, \dots, w_k) $
that minimizes this integral has identical components.}{ By iterating this argument  we get 
\begin{eqnarray*}
    \mu \caK_{\hat{\tau}} \caK_{F}([\hat{a},\infty)) 
    &\geq& \Min{(w_1, \dots, w_k)  } \int_{W^m} \mu \caK_{\hat{\tau}}(x)
    \mathbf{1}_{[\hat{a},\infty)}(F_{w_1,\dots, w_k}(x))  \DX{x} \ .
\end{eqnarray*}
For a symmetric monotone function $F$ one can assume that the vector $(w_1, \dots, w_k) $
that minimizes this integral has identical components. Thus,
\begin{eqnarray*}
    \mu \caK_{\hat{\tau}} \caK_{F}([\hat{a},\infty))
&\geq& \Min{w \in W } \int_{W^m} \mu \caK_{\hat{\tau}}(x)
        \mathbf{1}_{[\hat{a},\infty)}(F_{w,\ldots,w}(x))  \DX{x}    \\
&=&  \Min{w \in W } \mujw \caK_{\tau} \caK_{F}([\hat{a},\infty)).
\end{eqnarray*}}
This can be used to derive the following bound 
\KURZ{
    \begin{eqnarray*}
        \DD_{F, \SAMP_\tau(\Djv),\SAMP_{\hat{\tau}} (\caD)}(\ee)
            &\leq& \Max{w \in W}  \int_{S'}  \mujv \caK_{\tau}
                    \caK_{F}(a)-e^\ee \;  \mujw \caK_{\tau}   \caK_{F}(a)  \DX{a} \\
            &\leq& \Max{w \in W} \; \Max{S \subseteq A}  \int_{S}  \mujv \caK_{\tau}
               \caK_{F}(a)-e^\ee \; \mujw \caK_{\tau}  \caK_{F}(a)  \DX{a} \\
            &=& \Max{w \in W} \; \DD_{F, \SAMP_\tau(\Djv),\SAMP_{\tau} (\Djw)}(\ee). \ .
        \end{eqnarray*}
}{
    \begin{eqnarray*}
        \DD_{F, \SAMP_\tau(\Djv),\SAMP_{\hat{\tau}} (\caD)}(\ee)
            &=&  \int_{S'}  \mujv \caK_{\tau}   \caK_{F}(a)-e^\ee \; \mu \caK_{\hat{\tau}} \caK_{F}  \DX{a} \\
            &\leq& \Max{w \in W}  \int_{S'}  \mujv \caK_{\tau}
                    \caK_{F}(a)-e^\ee \;  \mujw \caK_{\tau}   \caK_{F}(a)  \DX{a} \\
            &\leq& \Max{w \in W} \; \Max{S \subseteq A}  \int_{S}  \mujv \caK_{\tau}
               \caK_{F}(a)-e^\ee \; \mujw \caK_{\tau}  \caK_{F}(a)  \DX{a} \\
            &=& \Max{w \in W} \; \DD_{F, \SAMP_\tau(\Djv),\SAMP_{\tau} (\Djw)}(\ee). \ .
        \end{eqnarray*}
}

 \end{proof}


\section{Privacy Amplification by Subsampling}\label{section:subsampling_results}

\KURZ{Serveral papers have discussed how subsampling can amplify privacy parameters in
the DP model (see among others \cite{KLN11,LQS12,BBG18}). Where the effect on
privacy for a single "worst case" database is considered. 
On the contrary, SP considers a distribution over possible databases, 
thus the interaction between the sampling distribution and the database
distribution has to be analyzed. It turns out that sampling techniques where entries can
appear multiple times are technically more difficult to handle because of possible
dependencies in the sample. }{Serveral papers have discussed how subsampling can amplify privacy parameters in
the DP model (see among others \cite{KLN11,BBG18}). Subsampling with rate $\ll$
turns a $(\ee, \delta)$-DP technique into one with parameters
$\left(\klog(1+\lambda (e^\ee -1)) ,\lambda \; \delta \right)$. 
Here a single database is considered.
On the contrary, SP considers a   distribution on database, 
thus the interaction between a distribution and sampling 
from a random database has to be analyzed. 
It turns out that sampling techniques where entries can
appear multiple times are technically more difficult to handle because of possible
dependencies in the sample. }

\subsection{Sampling without Replacement}

Sampling without replacement has been investigated for differential privacy 
in several papers, among others \cite{KLN11,LQS12,BBG18}.
By extending the coupling technique of Balle et al.~to our kernel characterization of
statistical privacy we can show

\begin{theorem}\label{theorem:subsampling_without_rep} \rm
Let $\mu$ be a distribution for databases of size $n$, $F$ a query and $\T_{n,m}$ sampling
without replacement  with sample size $m$. 
Then  
$F \circ SAMP_{\T_{n,m}}$     achieves
  $(\klog(1+\frac{m}{n} \;  (e^\ee -1)) ,\frac{m}{n} \; \MSPP_{F, \mu, \T_{n,m}}(\ee) )$--statistical privacy
    for $\mu$.
\end{theorem}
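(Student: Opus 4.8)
The plan is to verify the statistical-privacy inequality directly from the privacy-curve formulation. By the definition of statistical privacy it suffices, for every entry $j$ and all $v,w\in W$, to bound $\DD_{F,\SAMP_{\T_{n,m}}(\Djv),\SAMP_{\T_{n,m}}(\Djw)}(\ee')$ by $\frac{m}{n}\MSPP^j_{F,\mu,\T_{n,m}}(\ee)$, with $\ee'$ fixed by $e^{\ee'}=1+\frac{m}{n}(e^\ee-1)$, and then take the maximum over $j$. Write $\lambda:=m/n$. The first step is to split the template law according to whether the critical index $j$ is drawn. For sampling without replacement $j$ lies in the sample with probability exactly $\binom{n-1}{m-1}/\binom{n}{m}=\lambda$, and when $j$ is \emph{not} drawn the sample — hence the answer distribution $R:=\mu\,\caK_{\caT_{j,-}}\caK_F$ — does not depend on the value stored at position $j$. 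This yields the shared-component decompositions $\mujv\caK_{\T_{n,m}}\caK_F=(1-\lambda)R+\lambda P_+$ and $\mujw\caK_{\T_{n,m}}\caK_F=(1-\lambda)R+\lambda Q_+$ with the \emph{same} $R$ of weight $1-\lambda$, where $P_+:=\mujv\caK_{\caT_{j,+}}\caK_F$ and $Q_+:=\mujw\caK_{\caT_{j,+}}\caK_F$.

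For this structure I would apply the advanced joint convexity of the divergence $\DD$ established by Balle et al.~\cite{BBG18}. With $\beta:=e^{\ee'}/e^\ee\in[0,1]$ it converts the amplified curve into $\lambda$ times an unamplified one,
\[ \DD_{F,\SAMP_{\T_{n,m}}(\Djv),\SAMP_{\T_{n,m}}(\Djw)}(\ee')=\lambda\,\DD_{F,\SAMP_{\caT_{j,+}}(\Djv),\SAMP_{\caT^\beta}(\Djw)}(\ee), \]
whose inner second argument is the mixture $\beta Q_++(1-\beta)R$. The point that makes this usable is that, since $R$ is independent of the value at position $j$, we may write $R=\mujw\caK_{\caT_{j,-}}\caK_F$, so the mixture is itself a sampled distribution $\mujw\caK_{\caT^\beta}\caK_F$ for the mixed template law $\caT^\beta:=\beta\,\caT_{j,+}+(1-\beta)\,\caT_{j,-}$.

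It then remains to bound $\DD_{F,\SAMP_{\caT_{j,+}}(\Djv),\SAMP_{\caT^\beta}(\Djw)}(\ee)$ by $\MSPP^j_{F,\mu,\T_{n,m}}(\ee)$ via Lemma~\ref{lemma:bound_by_coupling} and a carefully chosen coupling $\nu$ of $\caT_{j,+}$ and $\caT^\beta$. On the $\beta$-branch I couple $\caT_{j,+}$ to itself by the identity; on the $(1-\beta)$-branch I draw $\tau\sim\caT_{j,+}$ and form $\htau$ by replacing the unique occurrence of $j$ in $\tau$ with an index drawn uniformly among those not used by $\tau$. A short counting check shows $\htau$ is then uniform on $\caT_{j,-}$, so $\nu$ has the correct marginals, and by construction $\tau\approx_j\htau$ with the two templates differing in a single position. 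The identity branch contributes $\DD_{F,\SAMP_\tau(\Djv),\SAMP_\tau(\Djw)}(\ee)\le\Max{v',w'\in W}\DD_{F,\SAMP_\tau(\Djv'),\SAMP_\tau(\Djw')}(\ee)$, while the replacement branch contributes $\DD_{F,\SAMP_\tau(\Djv),\SAMP_{\htau}(\caD)}(\ee)$, which is bounded by the same maximum through Lemma~\ref{lemma:delta_extendet_mittelwertsatz} — only its $k=1$ case, i.e.~Lemma~\ref{lemma:delta_mittelwertsatz}, is needed, since $\tau\approx_j\htau$ have distributional distance $1$. Taking the expectation over $\tau\sim\caT_{j,+}$, each branch is at most $\MSPP^j_{F,\mu,\T_{n,m}}(\ee)$, hence so is their $\beta,(1-\beta)$-combination; multiplying by $\lambda$ and maximizing over $j$ completes the argument.

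The step I expect to be delicate is the choice of $\nu$: advanced joint convexity unavoidably leaves the residual component $R$ in the second argument, and an arbitrary (e.g.~independent) coupling of $\caT_{j,+}$ with $\caT_{j,-}$ would produce cross-template terms $\DD_{F,\SAMP_\tau(\Djv),\SAMP_{\htau}(\caD)}$ whose templates differ in many positions and cannot be reabsorbed into $\MSPP$. The replacement coupling with $\tau\approx_j\htau$ is exactly what collapses these to the single-position case of Lemma~\ref{lemma:delta_mittelwertsatz}; I would also verify $\beta\in[0,1]$, which holds because $0\le\ee'\le\ee$, so that $\caT^\beta$ is a genuine probability mixture.
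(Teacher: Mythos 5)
Your proposal is correct and follows essentially the same route as the paper's proof: decompose the sample law by whether the critical index $j$ is drawn (a maximal coupling with parameter $m/n$), invoke advanced joint convexity at $\ee'=\klog(1+\frac{m}{n}(e^\ee-1))$, and handle the ``$j$ drawn vs.\ not drawn'' cross term with the replacement coupling $\tau\approx_j\htau$ plus Lemma~\ref{lemma:delta_mittelwertsatz}, so that every term collapses to $\MSPP^j_{F,\mu,\T_{n,m}}(\ee)$. The only (cosmetic) difference is that you use the equality form of advanced joint convexity and absorb the residual mixture into a single mixed template law $\caT^\beta$ handled by one application of Lemma~\ref{lemma:bound_by_coupling}, whereas the paper applies the inequality form and bounds the two resulting summands separately -- the underlying estimates are identical.
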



\begin{proof}
    Let $\ee' = \log(1+ \frac{m}{n} (e^\ee -1))$ and $v,w \in W$.
    The total variation of $\mujv \caK_\T$ and $\mujw \caK_\T$ is at most $m/n$ -
    the probability that the sensitive entry $j$ is drawn. Thus there exist
    a maximal coupling with parameter $m/n$ such that
    \begin{eqnarray*}
    \mujv \caK_\T &\gla& \left(1-\frac{m}{n}\right) \; \mujv \caK_{\T_{j,-}} + \frac{m}{n} \; \mujv \caK_{\T_{j,+}} \\
    \mujw \caK_\T &\gla& \left(1-\frac{m}{n}\right) \; \mujv \caK_{\T_{j,-}} +\frac{m}{n} \; \mujw \caK_{\T_{j,+}} \ .
    \end{eqnarray*}
Because of the advanced joint convexity property it
holds 
\begin{align*}
    &\DD_{F,\mujv \caK_\T, \mujw \caK_\T}(\ee') \\ 
    &\kla \frac{m}{n} \left((1-e^{\ee'-\ee}) \; \DD_{F, \mujv \caK_{\T_{j,+}} ,
    \mujv \caK_{\T_{j,-}}}(\ee) \; + \; e^{\ee'-\ee} \; \DD_{F, \mujv \caK_{\T_{j,+}},
    \mujw \caK_{\T_{j,+}}}(\ee)  \right).
\end{align*}
By definition we have that 
\[    \DD_{F, \mujv \caK_{\T_{j,+}},\mujw \caK_{\T_{j,+}}}(\ee) \kla  
    \MSPP_{F, \caD, \T_{j,+}}^j(\ee) \gla \MSPP_{F, \caD, \T}^j(\ee) \ .\]
\KURZ{Consider a coupling $\nu_{\T_{j,+}, \T_{j,-}}$ of $\T_{j,+}$ and
$\T_{j,-}$ which matches all subsamples equal except where $\T_{j,+}$ places the
sensitive entry. At these locations $\T_{j,-}$ selects an entry not drawn yet
uniformly distributed. Thus for all $(\tau_+, \tau_-) \in \supp \;\nu_{\T_{j,+},
\T_{j,-}}$ one can apply Lemma \ref{lemma:delta_mittelwertsatz} to the sampled
databases. Doing this after using Lemma \ref{lemma:bound_by_coupling} we can
bound the privacy curve as follows:
}{Consider a coupling $\nu_{\T_{j,+}, \T_{j,-}}$ of $\T_{j,+}$ and $\T_{j,-}$ such that $\tau_+
\approx_j \tau_-$ holds for $(\tau_+, \tau_-) \in \supp \;\nu_{\T_{j,+}, \T_{j,-}}$,
which matches all subsamples equal except where $\T_{j,+}$ places the sensitive
entry. At these locations $\T_{j,-}$ selects an entry not drawn yet uniformly
distributed. Using Lemma \ref{lemma:bound_by_coupling} and Lemma
\ref{lemma:delta_mittelwertsatz} we can bound the privacy curve as follows:}
\begin{eqnarray*}
    \DD_{F, \mujv \caK_{\T_{j,+}} ,\mujv \caK_{\T_{j,-}}}(\ee)
    &\leq& \sum_{\tau_+, \tau_-} \nu_{\T_{j,+},\T_{j,-}}(\tau_+,\tau_-) \;
    \DD_{F, \mujv \caK_{\tau_+} ,\mujv \caK_{\tau_-}}(\ee) \\
    &\leq& \sum_{\tau_+, \tau_-} \nu_{\T_{j,+},\T_{j,-}}(\tau_+,\tau_-) \; \Max{w \in
    D} \DD_{F, \mujv \caK_{\tau_+} ,\mujw \caK_{\tau_+}}(\ee) \\
    &\leq& \MSPP_{F, \caD, \T_{+}}^j (\ee) = \MSPP_{F, \caD, \T}^j(\ee).
\end{eqnarray*}
This implies the claim
\[ \DD_{F, \mujv \caK_{\T},\mujw \caK_{\T}}(\ee) \kla \frac{m}{n} \; \MSPP_{F, \caD,T}^j (\ee) 
    \kla \frac{m}{n} \; \MSPP_{F, \caD, \T} (\ee)  \  .\]
\end{proof}

\vspace{1ex}

\noindent
In case of identically and independently distributed entries Theorem~\ref{theorem:subsampling_without_rep} gives

\begin{corollary}\label{corollary:subsampling_without_rep:iid} \rm
    Let $\omu$ be the distribution of a database entry and $\mu_n=\omu^n$ and $\mu_m=\omu^m$ 
    be the product distributions for databases of size $n$, resp.~$m$.\\
 Then for a query $F$ the sampling mechanism   without replacement  $\T_{n,m}$ achieves
    $(\klog(1+ \frac{m}{n} \; (e^\ee -1)) ,\;  \frac{m}{n}   \;    \Phi_{\mu_m,F}(\ee) )$--statistical privacy 
    with respect to $\mu_n$.
\end{corollary}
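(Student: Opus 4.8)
The plan is to show that the i.i.d.\ case is a direct specialization of Theorem~\ref{theorem:subsampling_without_rep}, where the only real work is to identify the sampling privacy curve $\MSPP_{F,\mu_n,\T_{n,m}}(\ee)$ with the plain statistical privacy curve $\Phi_{\mu_m,F}(\ee)$ of a database of the reduced size~$m$. Theorem~\ref{theorem:subsampling_without_rep} already gives $(\klog(1+\tfrac{m}{n}(e^\ee-1)),\,\tfrac{m}{n}\,\MSPP_{F,\mu_n,\T_{n,m}}(\ee))$-statistical privacy, so the $\ee$-component of the statement is immediate. It remains only to bound $\MSPP_{F,\mu_n,\T_{n,m}}(\ee)$ by $\Phi_{\mu_m,F}(\ee)$, after which the $\dd$-component follows by substitution.

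The key observation is that under $\mu_n=\omu^n$, sampling without replacement never repeats an entry, so conditioning on the template set $\mathfrak{C}_{j,+}$ (the sensitive entry $j$ drawn at least once) means $j$ is drawn \emph{exactly once}. First I would fix a template $\tau\in\supp\,\T_{j,+}$ and examine $\DD_{F,\,\SAMP_\tau(\Djv),\,\SAMP_\tau(\Djw)}(\ee)$. Because the entries are independent and identically distributed and $\tau$ selects $m$ distinct indices, the pushforward $\SAMP_\tau$ turns $\Djv$ into a size-$m$ database whose marginals are all $\omu$ except at the single coordinate that received index $j$, which is fixed to $v$. Thus $\SAMP_\tau(\Djv)\sim(\mu_m)_{i,v}$ and $\SAMP_\tau(\Djw)\sim(\mu_m)_{i,w}$ for the corresponding position~$i$, and since $F$ is symmetric this position is irrelevant. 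Taking the maximum over $v,w\in W$ then yields exactly the quantity defining $\Phi_{\mu_m,F}(\ee)$, independently of the chosen $\tau$.

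Since the per-template value does not depend on $\tau$, the expectation over $\tau\sim\T_{j,+}$ in the definition of $\MSPP^j_{F,\mu_n,\T_{n,m}}(\ee)$ collapses to that constant, giving $\MSPP^j_{F,\mu_n,\T_{n,m}}(\ee)=\Phi_{\mu_m,F}(\ee)$ for every~$j$; the outer maximum over $j$ is then also $\Phi_{\mu_m,F}(\ee)$ by the i.i.d.\ symmetry. Substituting this into Theorem~\ref{theorem:subsampling_without_rep} yields the claimed $\tfrac{m}{n}\,\Phi_{\mu_m,F}(\ee)$ bound for~$\dd$. I expect the main obstacle to be purely bookkeeping: verifying carefully that the marginal distributions of $\SAMP_\tau(\Djv)$ are genuinely $\omu$ on all but the one fixed coordinate (so that $\SAMP_\tau(\Djv)$ really is distributed as a size-$m$ i.i.d.\ database with one entry fixed), and confirming that symmetry of the query lets us ignore which coordinate holds the fixed value so that the identification with $\Phi_{\mu_m,F}$ is exact rather than merely an inequality.
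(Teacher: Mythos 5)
Your proposal is correct and takes essentially the same route as the paper: the paper states the corollary as an immediate specialization of Theorem~\ref{theorem:subsampling_without_rep}, and your work identifying $\MSPP_{F,\mu_n,\T_{n,m}}(\ee)$ with $\Phi_{\mu_m,F}(\ee)$ (sampling without replacement from i.i.d.\ entries yields a size-$m$ i.i.d.\ database with one coordinate fixed, whose position is irrelevant by the symmetry of $F$, so the expectation over templates collapses to a constant) is exactly the bookkeeping the paper leaves implicit.
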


 When does subsampling provide an improvement? 
 For databases of size $n$ the
distribution $\mu_n$ guarantees $(\ee, \; \Phi_{\mu_n,F}(e))$-statistical
privacy, while on a subset of size $m$ one gets tuples 
$(\hat{\ee},\; \Phi_{\mu_m,F}(\hat{\ee}))$ for arbitrary $\hat{\ee} \ge 0$. 
For small $\hat{\ee}$ the term  $\klog(1+ \frac{m}{n} \; (e^{\hat{\ee}} -1))$ can be
approximated by $\ll \; \hat{\ee}$, where $\ll=m/n$. 
Thus the new privacy parameters are about
$(\ll \; \hat{\ee}, \; \ll \;  \Phi_{\mu_{\ll n},F}(\hat{\ee}) )$. This is a
linear reduction in both parameters, but the distribution $\mu_{\ll n}$ on the
smaller sample of size $m$ provides less statistical privacy. To make the
$\ee$-parameter equal we set $\hat{\ee}= \ll^{-1} \; \ee$ and get for
subsampling  the pair $(\ee, \; \ll \;  \Phi_{\mu_{\ll n},F}(\ll^{-1} \; \ee))$. 
For  a query $F$ and  distribution $\mu$  subsampling increases privacy iff (approximately)
\begin{equation} 
    \Phi_{\mu,F}(\ee) \gua \frac{m}{n} \; \MSPP_{F, \mu,\T_{n,m}}(\klog(1+\frac{n}{m} \;  (e^\ee -1)))  \ .
 \label{conclusion:sampling_viabiliy_check}
\end{equation}

\begin{figure}[h] 
    \centering
    \includegraphics[width=.80\linewidth]{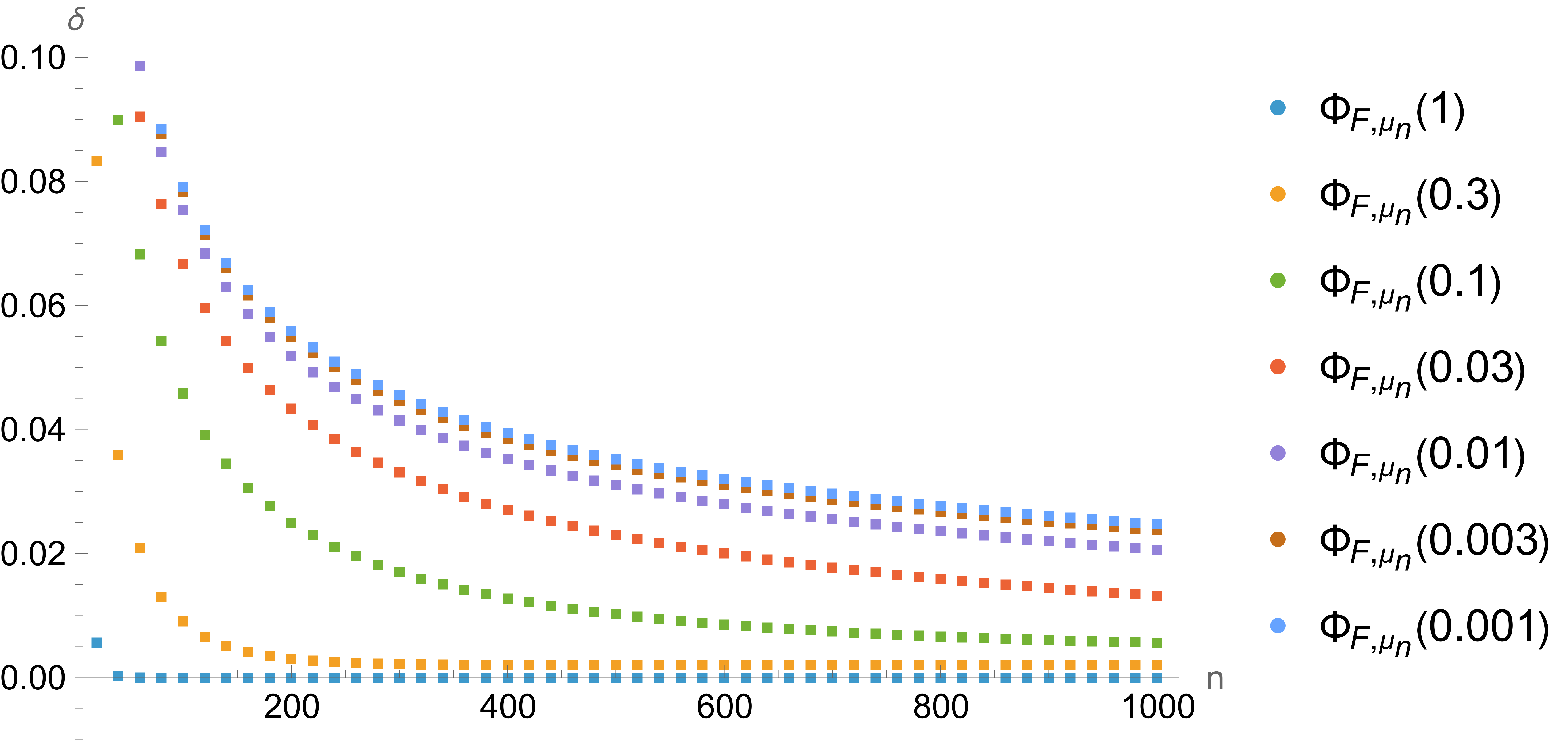}
    \caption{$\dd$ values for  property queries with $p=1/2$ for different $\ee$ values. 
        To ensure that the curves of $\ee = 1$, $\ee = 0.3$, and $\ee = 0.1$ 
        are better differentiated from each other, 
        an upward correction of $0.001$ was made for $\ee = 0.3$
        and $0.002$ for $\ee = 0.1$.}
        \label{figure:PQ_Delta_Parameter_DB_Size}
\end{figure}
Thus, the $\delta$-value decreases if the greater distributional divergence,  
when going from database size $n$ to smaller size $m$, is compensated by the
linear factor $\ll$ and the increase of the margin $\ee$ to $\hat{\ee}=\ll^{-1} \; \ee$. 
Since the entropy of the distribution decreases linear in $\ll$,
applying a function $F$ seems unlikely to give a larger entropy decrease.
In addition the margin $\hat{\ee}$ grows linearly. Since there are no simple
formulas for these relations one has to check the range of improvements for
every $\omu$ and $F$. For large $n$, since $\omu^n$ converges to a normal
distribution improvements should occur for a large range of $\lambda$.

As a typical example consider property queries which ask for
the proportion of database entries that have a certain property that has a priori probability $p$ (see\cite{BR25}). 
Fig.~\ref{figure:PQ_Delta_Parameter_DB_Size} shows the dependency 
between $n$ and $\dd$ for different $\ee$-values.
For larger $n$ the decrease of $\dd$ becomes smaller.

 \KURZ{ By scaling the answer to a property query we get a counting query and
 since SP fulfills the postprocessing condition they are equivalent with respect
 to privacy. 
 These counting queries are binomial distributed and for large $n$ can be approximated by
 a normal distribution $\caN(n \; p, n \; p (1-p))$ if $p$ is not close to $0$ or  $1$. 
 By considering the noise generated by the nonsensitive entries as additive
 noise in the DP sense, where counting queries have a sensitivity of $1$, the DP
 privacy bound in Appendix A of \cite{DR14} allows for the approximation
\[ \Phi_{\mu_n,F}(\ee) \approx 10/(n \; p(1-p) \; \ee^2 ). \] 
Thus for large $m$ and $n$ 
the inequality~(\ref{conclusion:sampling_viabiliy_check}) is equal to 
\begin{eqnarray*}
    & \Phi_{\mu_n,F}(\ee) &\gra \frac{m}{n} \Phi_{\mu_m,F}(\frac{n}{m} \; \ee) \\
    \Longleftrightarrow \qd & \Phi_{\mu_n,F}(\ee) &\gra \frac{m^2}{n^2} 10/(n \; p(1-p) \;
    (\ee)^2 ) \\ 
    \Longleftrightarrow \qd & \Phi_{\mu_n,F}(\ee) &\gra \frac{m^2}{n^2} \Phi_{\mu_n,F}(\ee) \ .
 \end{eqnarray*}
 Since the last inequality always holds for $m<n$, we can conclude that the privacy of property queries 
 is indeed amplified by subsampling.
}{ Since SP fulfills the postprocessing condition, equivalently with respect to privacy, 
 one may consider counting queries that ask for the 
 total number of entries with the property.
 Now the sensitivity is fixed to $1$ instead of decreasing as $1/n$.
The answer to a query is binomial distributed and for great $n$ can be approximated by
the normal distribution  $\caN(n \; p, n \; p (1-p))$ if $p$ is not close to $0$ or $1$.
In the DP sense the values of nonsensitive entries can be considered as noise which 
allows the approximation
\[ \Phi_{\mu_n,F}(\ee) \approx 10/(n \; p(1-p) \; \ee^2 ) \] 
using the estimation in Appendix A of \cite{DR14}. 
Thus, if $m$ is not too small one gets the following conclusion
\begin{eqnarray*}
    & \Phi_{\mu_n,F}(\ee) &\gra \frac{m}{n} \Phi_{\mu_m,F}(\frac{n}{m} \; \ee) \\
    \Longleftrightarrow \qd & \Phi_{\mu_n,F}(\ee) &\gra \frac{m}{n} 10/(m \; p(1-p) \;
    (\frac{n}{m} \; \ee)^2 ) \\
    \Longleftrightarrow \qd & \Phi_{\mu_n,F}(\ee) &\gra \frac{m^2}{n^2} 10/(n \; p(1-p) \;
    (\ee)^2 ) \\ 
    \Longleftrightarrow \qd & \Phi_{\mu_n,F}(\ee) &\gra \frac{m^2}{n^2} \Phi_{\mu_n,F}(\ee) \ .
 \end{eqnarray*}
 The last inequality trivially holds for $m<n$.}
In Fig.~\ref{figure:PQ_Delta_Parameter_DB_Size} one can see that
decreasing the database size $n$ leads to a smaller increase of $\dd$ 
(that means slope larger than~$-1$) except for very small values of $n$.
Fig.~\ref{figure:Ratio_Verification_Comparison} gives a plot of the ratios of the exact values of the
two sides in inequality~(\ref{conclusion:sampling_viabiliy_check}).
\begin{figure}[!h] 
    \centering
    \includegraphics[width=.80\linewidth]{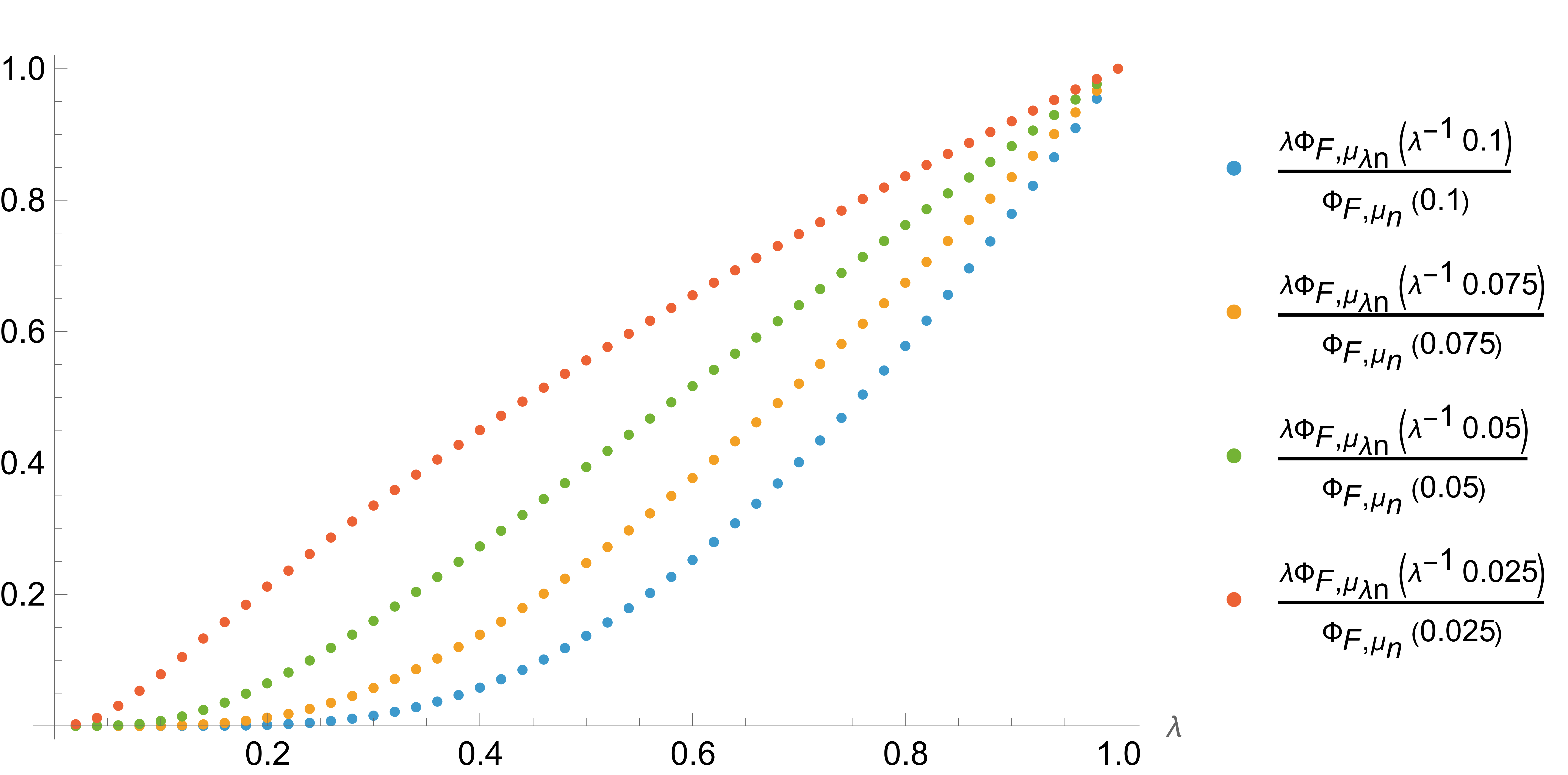}
    \caption{Ratio of the $\dd$ parameter with, resp.~without subsampling
    given a database of size $n=1000$ 
    for property queries with  $p=1/2$ and different sampling rates $\ll$. 
    Here $\ee$ takes the values  $0.1$, $0.075$, $0.05$ and $0.025$.}
    \label{figure:Ratio_Verification_Comparison}
 \end{figure}
It shows that for smaller $\ee$ values the ratio is larger meaning
 subsampling is less effective, but still one gets an improvement.
 
However, subsampling always decreases the utility since when
computing  $F$  on the whole database one gets the exact result  while
depending on the sampling rate $\ll$ and the function $F$ the results on samples
have some variance.

\subsection{Poisson Sampling}
Now consider Poisson sampling where every entry is drawn independently with
the same probability $\ll$ -- thus the sample size may vary.    
We can show two bounds for $\dd$ and conjecture that the second one is never
worse than the first one, but an analytic proof for this claim seems to be
tedious.

\begin{theorem}\label{theorem:subsampling:poisson}
    Let $\mu$ be a database distribution, $F$ a query and $\T_\ll$  Poisson sampling  with rate $\ll$.    
    Then $F \circ \SAMP_{\T_\ll}$ is $(\ee, \dd^\star(\ee))$--statistical
    private for 
    \[
        \dd^\star(\ee) \gla
        \sum_{ =0}^{n} \binom{n}{m} \ll^m (1-\ll)^{n-m} \; \frac{m}{n} \;  \MSPP_{F, \mu, \T_{n,m}} \left(\klog \left(1+\frac{n}{m} \;  (e^\ee -1)\right)\right).
    \]

 \end{theorem}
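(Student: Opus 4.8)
The plan is to reduce Poisson sampling to sampling without replacement by conditioning on the realized sample size, and then to invoke Theorem~\ref{theorem:subsampling_without_rep}. First I would decompose $\T_\ll$ according to the size of the drawn sample: since every entry is included independently with probability $\ll$, the sample size $m$ is binomially distributed with weight $\binom{n}{m}\ll^m(1-\ll)^{n-m}$, and conditioned on size $m$ every $m$-element subset of indices is equally likely. Hence the sampler $\T_\ll$ conditioned on size $m$ coincides with the without-replacement sampler $\T_{n,m}$, and the output densities split as
\[
\mujv\,\caK_{\T_\ll}\caK_F \gla \sum_{m=0}^{n}\binom{n}{m}\ll^m(1-\ll)^{n-m}\;\mujv\,\caK_{\T_{n,m}}\caK_F,
\]
with the analogous decomposition for $\mujw$.

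Next I would apply the convexity argument established in the proof of Lemma~\ref{lemma:bound_by_coupling}. Since both output densities are mixtures over $m$ with the \emph{same} weights, the subadditivity $\max(0,\sum_m x_m)\le\sum_m\max(0,x_m)$ of the positive part yields
\[
\DD_{F,\SAMP_{\T_\ll}(\mujv),\SAMP_{\T_\ll}(\mujw)}(\ee)\;\kla\;\sum_{m=0}^{n}\binom{n}{m}\ll^m(1-\ll)^{n-m}\;\DD_{F,\SAMP_{\T_{n,m}}(\mujv),\SAMP_{\T_{n,m}}(\mujw)}(\ee).
\]

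The decisive step is to invert the amplification relation of Theorem~\ref{theorem:subsampling_without_rep} size by size. Applied with an auxiliary parameter $\ee_m$, that theorem bounds the curve $\DD_{F,\SAMP_{\T_{n,m}}(\mujv),\SAMP_{\T_{n,m}}(\mujw)}$ evaluated at $\klog(1+\tfrac{m}{n}(e^{\ee_m}-1))$ by $\tfrac{m}{n}\,\MSPP_{F,\mu,\T_{n,m}}(\ee_m)$. I would therefore choose $\ee_m \dea \klog(1+\tfrac{n}{m}(e^\ee-1))$, which is exactly the value making $\klog(1+\tfrac{m}{n}(e^{\ee_m}-1))=\ee$; each summand is then at most $\tfrac{m}{n}\,\MSPP_{F,\mu,\T_{n,m}}(\klog(1+\tfrac{n}{m}(e^\ee-1)))$. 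Substituting into the mixture bound reproduces $\dd^\star(\ee)$ verbatim, and since $\MSPP_{F,\mu,\T_{n,m}}$ already maximises over the critical index $j$, the bound holds for every $j$, $v$, $w$, i.e.\ $F\circ\SAMP_{\T_\ll}$ is $(\ee,\dd^\star(\ee))$-statistical private.

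Two points deserve care. The $m=0$ term involves the degenerate empty sample, for which both databases are indistinguishable so that its privacy curve vanishes; this is consistent with the prefactor $\tfrac{m}{n}=0$, and I would simply record that the term contributes nothing (with the convention $0\cdot\infty=0$ for the otherwise undefined $\ee_0$). The main obstacle is making the convexity step rigorous across samples of different sizes living in different spaces $W^m$: I would handle this by regarding all samples as elements of the common space $W^\star$, so that the estimate of Lemma~\ref{lemma:bound_by_coupling} applies directly, the size-$m$ contributions simply being supported on disjoint parts of $W^\star$.
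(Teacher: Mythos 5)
Your proposal is correct and follows essentially the same route as the paper's proof: decompose the Poisson sampler by realized sample size into a binomial mixture of without-replacement samplers, use the subadditivity of the positive part (identical mixture weights on both sides) to split the privacy curve, and then apply Theorem~\ref{theorem:subsampling_without_rep} at the inverted parameter $\klog(1+\tfrac{n}{m}(e^\ee-1))$ for each size $m$. Your explicit treatment of the $\ee$-inversion and of the degenerate $m=0$ term is in fact more careful than the paper's terse presentation, which leaves both implicit.
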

 \begin{proof}
    Let $v,w \in W$ and $\ee > 0$. Recognize, that the sample
    distributions can be broken down by size $m \in [0:n]$ for $u \in \{v,w\}$, it holds
    \begin{eqnarray}
       \mu_{j,u} \caK_{\T} &=& \sum_{m=0}^{n} \binom{n}{m} \ll^m (1-\ll)^{n-m}
       \mujv \caK_{\T^m} \; , \label{subsampling:poisson:length_split_full}\\
    \end{eqnarray}
    This allows for the following bound for any $\ee > 0$
    \begin{align}
      &  \DD_{F,\mujv \caK_\T, \mujw \caK_\T}(\ee)\\
       &\gla \int_A \max ( \; 0, \; \mujv \caK_{\T} \caK_{F}(a) - e^{\ee} \; \mujw
        \caK_{\T} \caK_{F} (a) )  \DX{a}   \\
        &\kla \sum_{m=0}^{n} \binom{n}{m} \ll^m (1-\ll)^{n-m} \int_A \max ( \; 0, \; \mujv \caK_{\T^m} \caK_{F}(a) - e^{\ee} \; \mujw
        \caK_{\T^m} \caK_{F} (a) )  \DX{a}   \\
        &\gla  \sum_{m=0}^{n} \binom{n}{m} \ll^m (1-\ll)^{n-m} \DD_{F,\mujv \caK_{\T^m}, \mujw \caK_{\T^m}}(\ee)
    \end{align}
    Since $\T^m$ now draws a sample of fixed size $m$ where every entry has the
    same probability to be drawn exactly once this realizes sampling without
    replacement for a sample size of $m$ thus we can apply Theorem
    \ref{theorem:subsampling_without_rep} which bounds $ \DD_{F,\mujv \caK_\T, \mujw \caK_\T}(\ee)$  by 
    \begin{equation*}
 \sum_{m=0}^{n} \binom{n}{m} \ll^m (1-\ll)^{n-m} \; \frac{m}{n} \;  \MSPP_{F, \mu, \T_{n,m}} \left(\klog \left(1+\frac{n}{m} \;  (e^\ee -1)\right)\right).
    \end{equation*}
 \end{proof}
 \noindent
 In the case of i.i.d entries we get:
 
\begin{corollary}\label{corollary:poisson:iid}
    Let $\omu$ be the distribution of a database entry and  $\mu_m=\omu^m$ 
    the product distributions for a database of size $m$.
Then for a query $F$ Poisson sampling with  parameter $\ll$ and
%
    \[
        \dd^\star(\ee) = 
        \sum_{m=0}^{n} \binom{n}{m} \lambda^m (1-\lambda)^{n-m} \; \frac{m}{n} \; \DD_{F,\mu_m}\left(\klog \left(1+\frac{n}{m} \;  (e^\ee -1)\right)\right)
    \]
    $F \circ \SAMP_{\T_\ll}$ achieves $(\ee, \dd^\star(\ee))$--statistical privacy.
 \end{corollary}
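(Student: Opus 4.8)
The plan is to read the statement directly off Theorem~\ref{theorem:subsampling:poisson} and then simplify its right-hand side in the i.i.d.\ case, in exactly the way Corollary~\ref{corollary:subsampling_without_rep:iid} was obtained from Theorem~\ref{theorem:subsampling_without_rep}. For $\mu_n=\omu^n$ the theorem already gives
\[
\dd^\star(\ee)=\sum_{m=0}^{n}\binom{n}{m}\ll^m(1-\ll)^{n-m}\,\frac{m}{n}\,\MSPP_{F,\mu_n,\T_{n,m}}\!\left(\klog\!\left(1+\frac{n}{m}(e^\ee-1)\right)\right),
\]
so it suffices to establish the pointwise identity $\MSPP_{F,\mu_n,\T_{n,m}}(\ee')=\DD_{F,\mu_m}(\ee')$ for every $\ee'\ge0$, where $\DD_{F,\mu_m}$ abbreviates the size-$m$ statistical privacy curve $\Phi_{\mu_m,F}=\Max{j,v,w}\DD_{F,\mu_{m,j,v},\mu_{m,j,w}}$ (the $m=0$ summand is irrelevant because of the factor $m/n$).

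First I would unfold the definition of $\MSPP$. Sampling without replacement of size $m$ conditioned on the event $\T_{j,+}$ draws the sensitive index $j$ exactly once, so each template $\tau\in\supp\,\T_{j,+}$ places $j$ in a single coordinate and fills the remaining $m-1$ coordinates with distinct other indices. Applying $\SAMP_\tau$ to $\Djv$ thus produces a size-$m$ database whose coordinate at the image of $j$ is fixed to $v$, while the other $m-1$ coordinates carry the marginals of the corresponding entries; since $\mu_n=\omu^n$, these are i.i.d.\ copies of $\omu$. Because the family $F$ is symmetric, the value $\DD_{F,\SAMP_\tau(\Djv),\SAMP_\tau(\Djw)}(\ee')$ does not depend on which coordinate receives $j$, and it equals $\DD_{F,\mu_{m,i,v},\mu_{m,i,w}}(\ee')$ for any fixed position $i$ of the size-$m$ product distribution.

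Consequently $\Max{v,w}\DD_{F,\SAMP_\tau(\Djv),\SAMP_\tau(\Djw)}(\ee')$ is constant over $\tau\in\supp\,\T_{j,+}$, so the expectation $\E_{\tau\sim\T_{j,+}}[\,\cdot\,]$ in the definition of $\MSPP^j$ collapses to that common value, and by exchangeability of the entries the outer maximum over $j$ is attained at every coordinate. This yields $\MSPP_{F,\mu_n,\T_{n,m}}(\ee')=\Max{j,v,w}\DD_{F,\mu_{m,j,v},\mu_{m,j,w}}(\ee')=\Phi_{\mu_m,F}(\ee')=\DD_{F,\mu_m}(\ee')$. Substituting this identity with $\ee'=\klog(1+\frac{n}{m}(e^\ee-1))$ into the displayed bound reproduces the claimed formula, and the $(\ee,\dd^\star(\ee))$-statistical privacy guarantee follows from Theorem~\ref{theorem:subsampling:poisson}. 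I expect the only delicate step to be this reduction of $\MSPP$: one must invoke the symmetry of $F$ to rewrite each sample-level divergence as the clean product-distribution divergence, and the i.i.d.\ structure to make both the inner expectation and the outer maximum over coordinates trivial; the remaining steps are pure substitution.
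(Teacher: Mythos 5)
Your proposal is correct and follows exactly the route the paper intends: the corollary is read off Theorem~\ref{theorem:subsampling:poisson} by identifying, in the i.i.d.\ case, $\MSPP_{F,\mu_n,\T_{n,m}}$ with the size-$m$ privacy curve $\Phi_{\mu_m,F}$ (the paper's $\DD_{F,\mu_m}$), just as Corollary~\ref{corollary:subsampling_without_rep:iid} is obtained from Theorem~\ref{theorem:subsampling_without_rep}. Your explicit verification of that identification --- each template in $\supp\,\T_{j,+}$ fixes $j$ in one coordinate so that $\SAMP_\tau(\Djv)\sim\mu_{m,i,v}$, symmetry of $F$ makes the divergence independent of the position $i$, hence the expectation over templates and the maximum over $j$ both collapse --- is precisely the step the paper leaves implicit, and it is sound.
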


 \noindent
 Thus, Poisson sampling with parameter $\ll{}$ improves privacy if
  \[ \Phi_{\mu,F}(\ee) \gua  \; \sum_{m=0}^{n}
   \binom{n}{m} \lambda^m (1-\lambda)^{n-m} \ \frac{m}{n} \; \MSPP_{F, \mu, T^m}(\klog(1+\ll{} \;  (e^\ee -1)))  \ .\] 
   Fig.~\ref{figure:Poisson_sampling} visualizes the effect of Poisson sampling
   for property queries. As for sampling without replacement the improvement by
   Poisson sampling decreases when $\ee$ gets smaller.

\begin{figure}[h] 
    \centering  \includegraphics[width=.80\linewidth]{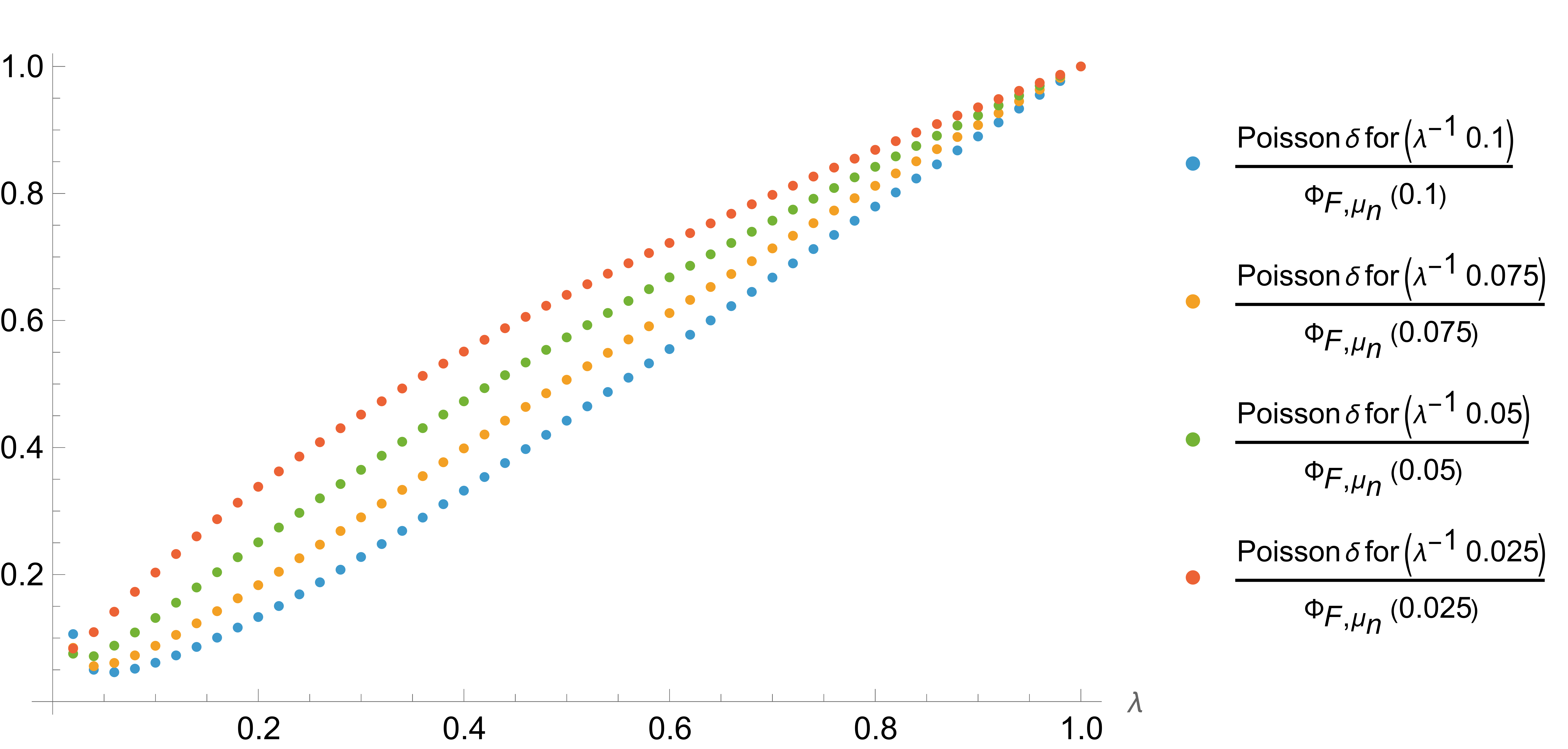}
    \caption{Ratio of the $\dd$ parameter for Poisson subsampling
    given a database of size $n=100$ 
    for property queries with  $p=1/2$ and different sampling rates $\ll$. 
    Here $\ee$ takes the values  $0.1, 0.075, 0.05, 0.025$.}
    \label{figure:Poisson_sampling}
 \end{figure}\vspace*{-2ex}

\subsection{Sampling with Replacement}

Sampling with replacement turns out to be technically most difficult to analyse
with respect to statistical privacy. The main problem here is the dependency
when a critical entry is drawn more than once. For this reason we need the
technical condition that the distribution is $F$-samplable. Let $\T_k$ denote
sampling where the sensitive entry is drawn $k$ times.

 \begin{theorem}
Let $F$ be a monotone query, $\mu$ a $F$-samplable distribution 
and $\T^+_{n,m}$  sampling with replacement genearating a sample of size $m$
from a set of size $n$. 
Then $F \circ \SAMP_{\T^+_{n,m}}$ achieves  
\[ ( \rlog(1+(1-(1-1/n)^m) (e^\ee -1)) ,\; \frac{n}{m}  \; \sum_{k=1}^{m}
    \binom{n}{k} (1/n)^k (1-1/n)^{(m-k)} \;  \MSPP_{F, \caD, \T_k}(\ee) ) \] 
statistical privacy.
  \end{theorem}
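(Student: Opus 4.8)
The plan is to follow the proof of Theorem~\ref{theorem:subsampling_without_rep} almost verbatim, with the single-draw probability $m/n$ replaced by $p_+ := 1-(1-1/n)^m$, the probability that the sensitive entry $j$ is hit at least once during $m$ draws with replacement, and with the extra work needed to control the event that $j$ is drawn more than once.

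First I fix $v,w\in W$ and set $\ee' := \log(1+p_+(e^\ee-1))$, which is precisely the first parameter in the statement. Since the sampled database does not depend on the value of $I_j$ on the event that $j$ is never drawn, a maximal coupling splits the two sampled distributions as
\[
  \mujv\,\caK_{\T^+_{n,m}} = (1-p_+)\,\mujv\,\caK_{\T_{j,-}} + p_+\,\mujv\,\caK_{\T_{j,+}},
\]
\[
  \mujw\,\caK_{\T^+_{n,m}} = (1-p_+)\,\mujv\,\caK_{\T_{j,-}} + p_+\,\mujw\,\caK_{\T_{j,+}},
\]
where the two remainder terms coincide. Invoking the advanced joint convexity property exactly as in Theorem~\ref{theorem:subsampling_without_rep} gives
\[
  \DD_{F,\mujv\caK_{\T^+_{n,m}},\mujw\caK_{\T^+_{n,m}}}(\ee')
  \le p_+\Big[(1-e^{\ee'-\ee})\,\DD_{F,\mujv\caK_{\T_{j,+}},\mujv\caK_{\T_{j,-}}}(\ee)
  + e^{\ee'-\ee}\,\DD_{F,\mujv\caK_{\T_{j,+}},\mujw\caK_{\T_{j,+}}}(\ee)\Big],
\]
so it remains to bound both inner divergences by the same quantity.

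The genuinely new difficulty, absent for sampling without replacement, is that conditioned on $\T_{j,+}$ the entry $j$ may occur $k>1$ times, so that replacing $I_j$ perturbs the sample in $k$ coordinates. I expose this by decomposing $\T_{j,+}$ according to the multiplicity $k$, writing $\mujv\caK_{\T_{j,+}} = \sum_{k=1}^{m}\frac{p_k}{p_+}\,\mujv\caK_{\T_k}$ with $p_k = \binom{m}{k}(1/n)^k(1-1/n)^{m-k}$, and applying the joint convexity of Lemma~\ref{lemma:bound_by_coupling}. For the second divergence this yields directly $\sum_{k=1}^{m}\frac{p_k}{p_+}\DD_{F,\mujv\caK_{\T_k},\mujw\caK_{\T_k}}(\ee)\le\sum_{k=1}^{m}\frac{p_k}{p_+}\MSPP_{F,\caD,\T_k}(\ee)$. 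For the first divergence I choose the coupling $\nu_{\T_{j,+},\T_{j,-}}$ so that $\tau_+\approx_j\tau_-$ holds on its support, i.e.\ every occurrence of $j$ in $\tau_+$ is replaced in $\tau_-$ by a fresh index $\ne j$; then, using that $F$ is monotone and $\mu$ is $F$-samplable, Lemma~\ref{lemma:delta_extendet_mittelwertsatz} applies termwise and bounds each $\DD_{F,\SAMP_{\tau_+}(\Djv),\SAMP_{\tau_-}(\caD)}(\ee)$ by $\max_{w}\DD_{F,\SAMP_{\tau_+}(\Djv),\SAMP_{\tau_+}(\Djw)}(\ee)$, whose $\T_{j,+}$-expectation is again $\sum_{k=1}^{m}\frac{p_k}{p_+}\MSPP_{F,\caD,\T_k}(\ee)$.

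Because both bracketed divergences obey this common bound and the weights $1-e^{\ee'-\ee}$ and $e^{\ee'-\ee}$ sum to one, the prefactor $p_+$ multiplies $\sum_{k=1}^{m}\frac{p_k}{p_+}\MSPP_{F,\caD,\T_k}(\ee)$, and the $p_+$ cancels to give $\sum_{k=1}^{m}\binom{m}{k}(1/n)^k(1-1/n)^{m-k}\MSPP_{F,\caD,\T_k}(\ee)$, the claimed second privacy parameter. I expect the main obstacle to be the first (alignment) divergence: one must exhibit a single coupling of $\T_{j,+}$ and $\T_{j,-}$ that realises $\approx_j$ simultaneously for every multiplicity $k$, and it is exactly here that monotonicity of $F$ and $F$-samplability of $\mu$ are indispensable, since without Lemma~\ref{lemma:delta_extendet_mittelwertsatz} the $k>1$ contributions cannot be reduced to the single-template quantities $\MSPP_{F,\caD,\T_k}$.
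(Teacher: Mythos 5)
Your proposal follows the paper's proof essentially step for step: the same maximal coupling with parameter $p_+=1-(1-1/n)^m$, the same application of advanced joint convexity, the same decomposition of $\T_{j,+}$ by the multiplicity $k$ of the sensitive entry, and the same $\approx_j$ coupling of $\T_{j,+}$ with $\T_{j,-}$ that feeds Lemma~\ref{lemma:bound_by_coupling} and Lemma~\ref{lemma:delta_extendet_mittelwertsatz} (this is exactly where the paper, too, invokes monotonicity and $F$-samplability). The only difference is corrective rather than substantive: your weights $\binom{m}{k}(1/n)^k(1-1/n)^{m-k}$, and the resulting clean cancellation of $p_+$, are the correct binomial probabilities, whereas the theorem's printed parameter ($\binom{n}{k}$ with a prefactor $n/m$) and the proof's final display (prefactor $\lambda$) are mutually inconsistent and appear to contain a typo in the paper itself.
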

  \begin{proof}
    \KURZ{ Let $\ll = m/n$ and $\ee' = \log(1+\lambda (e^\ee -1))$, $j \in
    [1:n]$ and $v, w \in W$. To ease the notation we simply write $\T$ for
    $\T^+_{n,m}$. As for sampling without replacement we consider a maximal
    coupling matching the distributions where the sensitive entry is drawn. Here
    the total variation of $\mujv \caK_\T$ and $\mujw \caK_\T$ is at most $\ll =
    1-(1-1/n)^m$ the probability to not draw the sensitive element. Using the
    advanced joint convexity property we bound $\DD_{F,\mujv \caK_\T, \mujw
    \caK_\T}(\ee') $ by 
    \begin{equation}
  \lambda \left((1-e^{\ee'-\ee}) \; \DD_{F, \mujv \caK_{\T_{j,+}} ,
        \mujv \caK_{\T_{j,-}}}(\ee) \; + \; e^{\ee'-\ee} \; \DD_{F, \mujv \caK_{\T_{j,+}},
        \mujw \caK_{\T_{j,+}}}(\ee)  \right). \label{subsampling:with_replacement:ajc}
    \end{equation}
Let us  first bound the second summand of 
\eqref{subsampling:with_replacement:ajc} by applying 
Lemma~\ref{lemma:bound_by_coupling}.
\begin{eqnarray*}
        \DD_{F, \mujv \caK_{\T_{j,+}}, \mujw \caK_{\T_{j,+}}}(\ee) 
        &\kla& \MSPP_{F, \caD ,\T_{j,+}}^j (\ee) 
        \gla \sum_{k \geq 0} \nu_{\T_{j,+}} (\mathfrak{C}_{j,k}) \; \MSPP_{F,
        \caD ,\T_{j,k}}^j (\ee).
        \label{subsampling:with_replacement:sum}
\end{eqnarray*}
To bound the first summand consider the coupling $\nu_{\T_{j,+}, \T_{j,-}}$ that
maps the templates $\tau_{+}, \tau_{-}$ such that $\tau_{+} \approx_j \tau_{-}$.
This means they map the same entry if the entry is not the sensitive one. Else
let $I$ be the set of indices where $(\tau_{+})_k =j$. Then for each $z \in I$
the entry $(\tau_{-})_z$ is uniformly distributed on $\{ 1, \ldots, n
\}\backslash\{ j \}$ and independent of the distributions of the entries
$(\tau_{-})_v$ $v \in I\backslash\{z\}$. That coupling allows for the application of Lemma
\ref{lemma:delta_extendet_mittelwertsatz} after bounding with Lemma
\ref{lemma:bound_by_mspp} thus
\begin{eqnarray*}
  \DD_{F, \mujv \caK_{\T_{j,+}} ,\mujv \caK_{\T_{j,-}}}(\ee)  
   &\kla&  \sum_{\tau_{+}, \tau_{-}} \nu_{\T_{j,+}, \T_{j,-}}(\tau_{+},\tau_{-})
         \DD_{F, \mujv \caK_{\tau_+} ,\mujv \caK_{\tau_-}}(\ee) \\
   &\kla& \sum_{\tau_{+}, \tau_{-}} \nu_{\T_{j,+},
         \T_{j,-}}(\tau_{+},\tau_{-}) \Max{w \in W} \DD_{F, \mujv \caK_{\tau_+} ,
         \mujw \caK_{\tau_+}}(\ee) \\
   &\kla& \MSPP_{F,\caD,\T}^j (\ee) \gla
     \sum_{k \geq 0} \nu_{\T_{j,+}} (\mathfrak{C}_{j,k}) \; \MSPP_{F, \caD ,\T_{j,k}}^j(\ee)  \ .
     \end{eqnarray*}
Since the probability of drawing a specific entry $k$ times is binomial
     distributed we can calculate $\nu_{\T_{j,+}} (\mathfrak{C}_{j,k})$ giving 
     \begin{eqnarray*}
        \DD_{F,\mujv \caK_\T, \mujw \caK_\T}(\ee') &\kla\lambda\; \sum_{k=1}^{m} \binom{n}{k}
        (1/n)^k (1-1/n)^{(m-k)} \; \MSPP_{F, \caD ,\T_{j,k}}^j (\ee)\\
        &\kla \lambda\; \sum_{k=1}^{m} \binom{n}{k}
     (1/n)^k (1-1/n)^{(m-k)} \; \MSPP_{F, \caD ,\T_{k}} (\ee) \ . 
     \end{eqnarray*}
    }{
        Let $\ll = m/n$ and $\ee' = \log(1+\lambda (e^\ee -1))$, $j \in
        [1:n]$ and $v, w \in W$. 
        To ease the notation we simply write $\T$ for $\T^+_{n,m}$.
        The total variation of $\mujv \caK_\T$ and $\mujw \caK_\T$ is at most $\ll =
        1-(1-1/n)^m$ which matches the probability that we do not draw the sensitive
      at all. This implies that there exist a maximal coupling with parameter \ll. 
        By matching the templates in the coupling with $\T_{j,+}, \T_{j,-}$ as defined
        earlier we get
        \begin{eqnarray*}
           \mujv \caK_\T &\gla& (1-\lambda) \; \mujv \caK_{\T_{j,-}} + \lambda \; \mujv \caK_{\T_{j,+}} \\
           \mujw \caK_\T &\gla& (1-\lambda) \; \mujv \caK_{\T_{j,-}} + \lambda \; \mujw \caK_{\T_{j,+}} 
        \end{eqnarray*}
        Because of the advanced joint convexity property it
        holds 
        \begin{align}
            &\DD_{F,\mujv \caK_\T, \mujw \caK_\T}(\ee') \nonumber \\ 
            &\kla \lambda \left((1-e^{\ee'-\ee}) \; \DD_{F, \mujv \caK_{\T_{j,+}} ,
            \mujv \caK_{\T_{j,-}}}(\ee) \; + \; e^{\ee'-\ee} \; \DD_{F, \mujv \caK_{\T_{j,+}},
            \mujw \caK_{\T_{j,+}}}(\ee)  \right). \label{subsampling:with_replacement:ajc}
        \end{align}
    Let us  first bound the second summand of the right hand side of 
    \eqref{subsampling:with_replacement:ajc} by applying 
    Lemma~\ref{lemma:bound_by_mspp}.
    \begin{eqnarray*}
            \DD_{F, \mujv \caK_{\T_{j,+}}, \mujw \caK_{\T_{j,+}}}(\ee) 
            &\kla& \MSPP_{F, \caD ,\T_{j,+}}^j (\ee) 
            \gla \sum_{k \geq 0} \nu_{\T_{j,+}} (\mathfrak{C}_{j,k}) \; \MSPP_{F,
            \caD ,\T_{j,k}}^j (\ee).
            \label{subsampling:with_replacement:sum}
    \end{eqnarray*}
    To bound the first summand 
    consider the coupling
    $\nu_{\T_{j,+}, \T_{j,-}}$ that maps the templates $\tau_{+}, \tau_{-}$
        such that $\tau_{+} \approx_j \tau_{-}$.
    This means they map the same entry if
    the entry is not the sensitive one.
    Else let $I$ be the set of indices where $(\tau_{+})_k =j$.
     Then for each $z \in I$ the entry $(\tau_{-})_z$ is
        uniformly distributed on $\{ 1, \ldots, n \}\backslash\{ j \}$ and independent
        of the distributions of the entries $(\tau_{-})_v$ $v \in I\backslash\{z\}$.
        This creates a coupling $\nu_{\T_{j,+}, \T_{j,-}}$ for $\T_{j,+}$ and
        $\T_{j,-}$ that fulfills the conditions of Lemma
        \ref{lemma:delta_extendet_mittelwertsatz}. 
    This together with Lemma
        \ref{lemma:bound_by_coupling} implies
    \begin{eqnarray*}
      \DD_{F, \mujv \caK_{\T_{j,+}} ,\mujv \caK_{\T_{j,-}}}(\ee)  
       &\gla&  \sum_{\tau_{+}, \tau_{-}} \nu_{\T_{j,+}, \T_{j,-}}(\tau_{+},\tau_{-})
             \DD_{F, \mujv \caK_{\tau_+} ,\mujv \caK_{\tau_-}}(\ee) \\
       &\kla& \sum_{\tau_{+}, \tau_{-}} \nu_{\T_{j,+},
             \T_{j,-}}(\tau_{+},\tau_{-}) \Max{w \in W} \DD_{F, \mujv \caK_{\tau_+} ,
             \mujw \caK_{\tau_+}}(\ee) \\
       &\kla& \MSPP_{F,\caD,\T}^j (\ee) \gla
         \sum_{k \geq 0} \nu_{\T_{j,+}} (\mathfrak{C}_{j,k}) \; \MSPP_{F, \caD ,\T_{j,k}}^j(\ee)  \ .
         \end{eqnarray*}
    Since the probability of drawing a specific entry $k$ times is binomial
         distributed we can calculate $\nu_{\T_{j,+}} (\mathfrak{C}_{j,k})$ giving 
         \begin{eqnarray*}
            \DD_{F,\mujv \caK_\T, \mujw \caK_\T}(\ee') &\kla\lambda\; \sum_{k=1}^{m} \binom{n}{k}
            (1/n)^k (1-1/n)^{(m-k)} \; \MSPP_{F, \caD ,\T_{j,k}}^j (\ee)\\
            &\kla \lambda\; \sum_{k=1}^{m} \binom{n}{k}
         (1/n)^k (1-1/n)^{(m-k)} \; \MSPP_{F, \caD ,\T_{k}} (\ee) \ . 
         \end{eqnarray*}
    }
  \end{proof}
As before the question arises when does sampling with replacement give an advantage, 
that means when holds
\[ \Phi_{\mu,F}(\ee) \gua \lambda  \; \sum_{k=1}^{m}
  \binom{n}{k} (1/n)^k (1-1/n)^{(m-k)} \;  \MSPP_{F, \mu, \T_k}
  (\klog(1+\frac{n}{m} \;  (e^\ee -1))) \ .  \] 
This condition is significantly more complex than for sampling without replacement.
However, we expect a similar behavior since the probability to draw the critical element several times
 decreases exponentially.

\section{Trade-off Functions for Statistical Privacy}

$\ee$-divergence and $(\ee,\dd)$-curves measure the amount of privacy that  can
be guaranteed. Dong et al.~have proposed an alternative approach  considering
privacy as a hypothesis testing problem \cite{DRS22}.
  
\begin{definition}{\rm \wfinw{Trade-off Function}\label{tradeoff-function}\\
For  random variables $X \sim \mu_x$ and $Y \sim \mu_y$ on the same
 space $A$, the trade-off function $T_{X,Y} \colon [0,1] \to [0,1]$  is defined by
 \begin{equation*}
T_{X,Y}(\alpha) \dea  \Inf{S \subseteq A : \; \Prob {X \not\in S}{} \le \aa}  \ \Prob{Y \in S}{}  \ . 
      \end{equation*}
  }\end{definition}

Now the ability of an adversary to differentiate between two situations where
a critical entry is either $v$ or $w$ is measured by such a function.
This approach can be adapted to statistical privacy as well.

 \begin{definition}{\rm \wfinw{$T$--Statistical Privacy ($T$--SP)}\label{T_SP}\\
For a  distribution $\mu$, a query $F$  and a trade-off  function $T$, 
a mechanism $M$ achieves  \bfinw{$T$-statistical privacy} if for $\caD
    \sim \mu$, all $j \in [1:n]$ and $w,v \in W$ holds 
 \[   T_{M(F,\Djw),M(F,\Djv) }\gra T \ .\]
  }\end{definition}
  
 \noindent
To apply this notion certain operations on trade-off functions are needed:
    \begin{description}
        \item[Convex Conjugate:] $T^\star(y) \dea  \Sup{x} \ y \cdot x - T(x)$\\[-0.2ex]
        \item[Inverse:] $T^{-1}(\alpha) \dea \Inf{} \left\{ x \in  [0,1] \mid T(x) \leq \alpha \right\}$\\[-0.2ex]
        \item[Sample:] $T_p(x) \dea  p \; T(x) + (1-p) (1-x)$\\[-0.2ex]
        \item[$p$--sampling--operator:] $\Psi_p(T)  \dea  \Min{} \left(\left\{ T_p, T_p^{-1}
        \right\}^{\star}\right)^{\star}$
      \end{description} 
  
  
 \noindent  
The analysis that links $(\ee,\dd)$-DP  to trade-off functions 
can be adapted to statistical privacy.

  \begin{theorem}{\rm 
\label{theorem:SP_f-SP}  $ $ \\
For a distribution $\mu$ and a query $F$, a mechanism $M$ achieves
$T$-statistical privacy for the trade-off function defined by
\[ T(\alpha) \dea \Sup{\ee \geq 0} \;  \Max{}\{0, \ 1- \Phi_{\mu,F,M}(\ee) -
      e^\ee \alpha , \ e^{-\ee} (1 - \Phi_{\mu,F,M}(\ee) - \alpha)   \} . \]
Furthermore, if for a trade-off function $T$ one achieves  $T$-statistical
privacy then in the standard notion  one achieves $(\ee, 1+
T^\star(-e^\ee))$-statistical privacy.
  }\end{theorem}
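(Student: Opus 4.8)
The plan is to translate the definition of the trade-off function $T_{X,Y}$ into the language of privacy curves $\DD_{F,\mu,\hmu}$, then optimize. For the first part I would fix an entry $j$ and values $w,v \in W$, and consider the two distributions on answers, $\mu_w := \mujw \caK_F \caK_M$ (and analogously $\mu_v$), induced by running $F$ then the mechanism $M$. The trade-off function $T_{M(F,\Djw),M(F,\Djv)}$ measures, for each type-I error bound $\alpha$ on rejecting $\mu_w$, the minimum type-II error under $\mu_v$. The key link is that the $(\ee,\dd)$-statistical-privacy guarantee $\Phi_{\mu,F,M}$ bounds the $\alpha$-divergence $\DD$ in \emph{both} directions (swapping $w$ and $v$), i.e.\ for every $\ee\ge 0$ we have $\Prob{M(F,\Djw)\in S}{}\le e^\ee\Prob{M(F,\Djv)\in S}{}+\Phi_{\mu,F,M}(\ee)$ and the symmetric inequality. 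The two terms inside the $\Max$ in the claimed formula are exactly the two Neyman--Pearson lower bounds on the type-II error coming from these two directional $(\ee,\dd)$ constraints, and taking $\Sup{\ee\ge0}$ gives the best such bound.

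Concretely, for the forward direction I would argue as follows. Given any rejection region $S$ with $\Prob{M(F,\Djw)\in S}{}\le\alpha$, I want a lower bound on $\Prob{M(F,\Djv)\in S}{}$, since $T$ is an infimum over such $S$. Writing $\alpha_S = \Prob{M(F,\Djw)\not\in S}{}\le\alpha$ and $\beta_S = \Prob{M(F,\Djv)\in S}{}$, the $(\ee,\dd)$ guarantee applied to the complement $A\setminus S$ in one direction yields $1-\beta_S = \Prob{M(F,\Djv)\not\in S}{}\le e^\ee\,\Prob{M(F,\Djw)\not\in S}{}+\Phi \le e^\ee\alpha+\Phi$, hence $\beta_S \ge 1-\Phi-e^\ee\alpha$; applied in the other direction to $S$ itself it yields $\Prob{M(F,\Djw)\in S}{}\le e^\ee\beta_S+\Phi$, so $\beta_S\ge e^{-\ee}(\Prob{M(F,\Djw)\in S}{}-\Phi)\ge e^{-\ee}(1-\alpha-\Phi)$ after bounding $\Prob{M(F,\Djw)\in S}{}\ge 1-\alpha$. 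Since these hold for every $\ee\ge0$ and every admissible $S$, taking the infimum over $S$ and the supremum over $\ee$ gives $T_{M(F,\Djw),M(F,\Djv)}(\alpha)\ge T(\alpha)$, which is the desired $\gra$ in Definition~\ref{T_SP}. The truncation at $0$ in the $\Max$ is automatic since probabilities are nonnegative.

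For the converse I would start from a $T$-statistical-privacy guarantee and recover an $(\ee,\dd)$ bound by evaluating the trade-off function against a single well-chosen rejection region. The cleanest route is through the convex conjugate: the quantity $1+T^\star(-e^\ee)$ is, by the definition $T^\star(y)=\Sup{x}\,yx-T(x)$, exactly $1+\Sup{\alpha}\{-e^\ee\alpha-T(\alpha)\}=\Sup{\alpha}\{1-e^\ee\alpha-T(\alpha)\}$. Since $T$ lower-bounds the true trade-off function, for the optimal (Neyman--Pearson) region $S$ achieving type-I error $\alpha$ one has $T(\alpha)\le\Prob{M(F,\Djv)\in S}{}$, and rearranging $1-e^\ee\alpha-T(\alpha)\ge \Prob{M(F,\Djw)\in S}{}-e^\ee\Prob{M(F,\Djv)\in S}{}$; maximizing over $S$ reproduces the $\alpha$-divergence characterization of $\DD$, giving $\DD_{F,\mu_w,\mu_v}(\ee)\le 1+T^\star(-e^\ee)$, i.e.\ $(\ee,1+T^\star(-e^\ee))$-statistical privacy.

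The main obstacle I anticipate is bookkeeping the two directions of the privacy inequality and matching them correctly to the two branches of the $\Max$, together with the care needed at the boundary where the infimum over $S$ is attained by a randomized (not sharp) test — the standard Neyman--Pearson argument assumes one can realize any type-I level $\alpha$ exactly, which may require randomizing on the decision boundary. Handling the $\Sup{\ee\ge0}$ and the convex-conjugate duality rigorously, so that the supremum in $T$ and the one hidden in $T^\star$ are genuine inverses of each other, is the delicate technical point; the rest is a direct transcription of the Neyman--Pearson lemma into the $\DD$-notation already fixed in the excerpt.
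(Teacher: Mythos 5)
Your proposal is correct and is essentially the proof the paper intends: the paper states this theorem without giving a proof, deferring to the $(\ee,\dd)$--trade-off correspondence of Dong et al.~\cite{DRS22}, and your argument is exactly that adaptation --- the SP inequality applied to $A\setminus S$ (direction $v$ versus $w$) gives the branch $1-\Phi_{\mu,F,M}(\ee)-e^\ee\alpha$, applied to $S$ itself (direction $w$ versus $v$) gives the branch $e^{-\ee}(1-\Phi_{\mu,F,M}(\ee)-\alpha)$, and convex conjugation handles the converse. One bookkeeping error should be fixed in your converse step: with $\alpha=\Prob{M(F,\Djw)\not\in S}{}$ and $T(\alpha)\le\Prob{M(F,\Djv)\in S}{}$, rearranging yields
\[ 1-e^\ee\alpha-T(\alpha)\ \ge\ \Prob{M(F,\Djv)\not\in S}{}-e^\ee\,\Prob{M(F,\Djw)\not\in S}{}\ , \]
not $\Prob{M(F,\Djw)\in S}{}-e^\ee\,\Prob{M(F,\Djv)\in S}{}$ as written; since $A\setminus S$ ranges over all measurable sets as $S$ does, and the $T$-SP hypothesis holds for every ordered pair $(v,w)$, taking suprema still gives $\Prob{M(F,\Djv)\in S'}{}\le e^\ee\,\Prob{M(F,\Djw)\in S'}{}+1+T^\star(-e^\ee)$ for all $S'$, so the conclusion stands after this relabeling. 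Finally, your worry about randomized tests at the decision boundary is unnecessary: both directions of the argument use only inequalities valid for every set $S$, never exact achievability of a prescribed type-I level, which is also why the paper's purely set-based definition of the trade-off function suffices here.
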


  In case of subsampling the situation is more complex
  since the size of the database has a larger influence for statistical privacy.
  For differential privacy it is often assumed that
  the sensitivity is independent of the database size, which seems
  questionable, for example in case of property queries. 
  Under this assumption the effect of subsampling has been analysed  precisely. 
  If a mechanism $M$ achieves $(\ee,  \dd)$-DP 
  then $M \circ \SAMP_\T$ achieves $(\klog(1+\ll (e^\ee -1)), \ll{}\dd)$-DP
 for a sampling technique $T$ without
  replacement and sampling rate $\ll{}$ \cite{BBG18}.

For statistical privacy the uncertainty of an adversary is depends on the size of the
  database. Thus the techniques used in the the proofs for the $T$-DP
  subsampling theorem need to be adjusted to utilize the the parameters achieved
  when considering a smaller database.

\begin{theorem}\label{theorem:T_sp_without_replacement} \rm
Le $\omu$ be a distribution of entries, $\mu_n=\omu^n$ and $\mu_m=\omu^m$
be the corresponding product distributions for databases of size $n$, resp.~$m$ and
$\T_{n,m}$  sampling without replacement. 
If for  a trade-off function $T$ one can achieve  $T$-statistical privacy 
for $\mu_m$ and $F$   then $F \circ \SAMP_{\T_{n,m}}$ achieves 
$\Psi_\frac{m}{n}(T)$-statistical privacy for $\mu_n$.
\end{theorem}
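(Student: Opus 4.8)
The plan is to mirror the $T$-DP subsampling argument of Dong et al.\ but to account for the fact that in the statistical setting the trade-off guarantee we start from is for the \emph{smaller} database $\mu_m$, not for $\mu_n$. First I would translate the hypothesis ``$T$-statistical privacy for $\mu_m$'' into the language of $(\ee,\dd)$-curves via Theorem~\ref{theorem:SP_f-SP}: $T$-SP for $\mu_m$ is essentially the dual description of the family of $(\ee,\Phi_{\mu_m,F}(\ee))$-pairs, so the trade-off function $T$ encodes exactly the divergence curve $\DD_{F,\mu_m}(\ee)$ that Corollary~\ref{corollary:subsampling_without_rep:iid} feeds on. I would then invoke Corollary~\ref{corollary:subsampling_without_rep:iid}, which states that $F\circ\SAMP_{\T_{n,m}}$ achieves $(\klog(1+\tfrac{m}{n}(e^\ee-1)),\ \tfrac{m}{n}\,\Phi_{\mu_m,F}(\ee))$-statistical privacy for $\mu_n$. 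The whole theorem is thus the assertion that this $(\ee,\dd)$-level amplification is exactly captured on the trade-off side by the operator $\Psi_{m/n}$.

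The core of the proof is to show that the sampling operator $\Psi_p$ on trade-off functions is the correct dual of the $(\ee,\dd)\mapsto(\klog(1+p(e^\ee-1)),\ p\,\dd)$ transformation, with $p=m/n$. The plan is to unfold the definition $\Psi_p(T)=\Min{}(\{T_p,T_p^{-1}\}^\star)^\star$ and verify that each building block corresponds to one feature of subsampling. The elementary sampling operation $T_p(x)=p\,T(x)+(1-p)(1-x)$ models the convex combination ``with probability $p$ the sensitive entry is drawn and the databases differ, with probability $1-p$ they look identical,'' which is precisely the maximal-coupling decomposition used in the proof of Theorem~\ref{theorem:subsampling_without_rep}; the symmetrization via $\{T_p,T_p^{-1}\}$ enforces that the bound hold regardless of which of $v,w$ is treated as the null hypothesis, and the double convex-conjugate $(\cdot)^\star{}^\star$ takes the convex lower envelope, which is exactly the passage from a pointwise family of $(\ee,\dd)$-constraints to a single trade-off function.

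Concretely I would argue in two directions, matching the two halves of Theorem~\ref{theorem:SP_f-SP}. For soundness, starting from $T$-SP for $\mu_m$ I would extract, for each $\ee$, the bound $\Phi_{\mu_m,F}(\ee)\le 1+T^\star(-e^\ee)$, push it through Corollary~\ref{corollary:subsampling_without_rep:iid} to get the amplified pair $(\ee',\ \tfrac{m}{n}(1+T^\star(-e^{\ee'})))$ with $\ee'=\klog(1+\tfrac{m}{n}(e^\ee-1))$, and then recombine these pairs over all $\ee$ using the first part of Theorem~\ref{theorem:SP_f-SP} to reconstruct a trade-off function. The final step is to recognize that the trade-off function so reconstructed is exactly $\Psi_{m/n}(T)$: this is the identity $\big(\Psi_p(T)\big)^\star(-e^{\ee'}) = p\,T^\star(-e^\ee) $ relating the convex conjugates under the reparametrization $e^{\ee'}=1+p(e^\ee-1)$, which is the algebraic heart of the matter. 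I would verify this conjugate identity directly from the definitions of $T_p$ and the convex-conjugate operator.

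The main obstacle I anticipate is precisely this last identity: establishing that the composite operator $\Psi_p$, built from $T_p$, inversion, and the double conjugate, is equal as a whole to the $(\ee,\dd)$-amplification after the change of variables $e^{\ee'}=1+p(e^\ee-1)$. The subtlety is that $\Psi_p$ is not simply a rescaling of $T$ but involves taking convex envelopes and a symmetric $\Min{}$ over $\{T_p,T_p^{-1}\}$, so one must be careful that the $\Sup{}$ over $\ee$ in Theorem~\ref{theorem:SP_f-SP} and the $\Min{}$/$\star$ operations in $\Psi_p$ commute in the right way. In the DP case Dong et al.\ handle an analogous commutation, but here $\Phi_{\mu_m,F}$ (rather than a fixed $\dd$) varies with $\ee$, so the bookkeeping in the reparametrization is heavier; I would isolate the conjugate identity $\big(\Psi_p(T)\big)^\star(-e^{\ee'})=p\,T^\star(-e^\ee)$ as a separate lemma and treat the rest as a routine translation between the curve picture and the trade-off picture.
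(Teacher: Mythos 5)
Your overall route is the same as the paper's: the paper's proof is precisely the three-step sketch you describe --- dualize the hypothesis ``$T$-SP for $\mu_m$'' into the curve bound $\Phi_{\mu_m,F}(\ee)\le 1+T^\star(-e^\ee)$ via the second half of Theorem~\ref{theorem:SP_f-SP}, push it through Corollary~\ref{corollary:subsampling_without_rep:iid} to get $(\log(1+\tfrac{m}{n}(e^\ee-1)),\ \tfrac{m}{n}(1+T^\star(-e^\ee)))$-SP for $\mu_n$, and then re-primalize using the machinery of \cite{DRS22} to identify the resulting family of guarantees with $\Psi_{m/n}(T)$. You are also right that the only genuine content in the last step is a duality identity for $\Psi_p$, which the paper leaves implicit.

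However, the identity you isolate as ``the algebraic heart'' is stated incorrectly. You claim $\bigl(\Psi_p(T)\bigr)^\star(-e^{\ee'})=p\,T^\star(-e^\ee)$ under $e^{\ee'}=1+p(e^\ee-1)$. The correct statement, which is what makes the dictionary $\dd(\ee)=1+T^\star(-e^\ee)$ compatible with the amplification $(\ee,\dd)\mapsto(\ee',p\dd)$, is
\[
 1+\bigl(\Psi_p(T)\bigr)^\star(-e^{\ee'})\;=\;p\,\bigl(1+T^\star(-e^\ee)\bigr),
 \qquad\text{equivalently}\qquad
 \bigl(\Psi_p(T)\bigr)^\star(-e^{\ee'})\;=\;p\,T^\star(-e^\ee)+p-1,
\]
so your version is off by the additive constant $p-1$. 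To see it is false as stated, test $T=\mathrm{Id}$ (perfect privacy, $T(x)=1-x$): then $T_p=\mathrm{Id}$, hence $\Psi_p(T)=\mathrm{Id}$, and both conjugates equal $-1$, whereas your identity would force $-1=-p$. Alternatively test $T=f_{\ee,\dd}$, for which $\Psi_p(f_{\ee,\dd})=f_{\ee',p\dd}$: the two conjugates are $\dd-1$ and $p\dd-1$, again contradicting your formula. The slip is visible inside your own write-up: in your second step you correctly carry the quantity $\tfrac{m}{n}(1+T^\star(\cdot))$, but in the final step you try to match it against $1+\tfrac{m}{n}T^\star(\cdot)$. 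Because $p(1+T^\star)\le 1+pT^\star$ for $p\le 1$, your false identity errs in the harmless direction, so the theorem's conclusion would still follow from your step-two bound; but the separate lemma you propose to ``verify directly from the definitions'' cannot be proved as stated, and the verification would stall until the constant is repaired. A smaller point: the amplified pair in your second step should read $(\ee',\,\tfrac{m}{n}(1+T^\star(-e^{\ee})))$, with the conjugate evaluated at the original $\ee$, not at $\ee'$.
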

\begin{proof}
    The proof uses the machinery of \cite{DRS22}, but instead of the
    sampling result for DP we use corollary
    \ref{corollary:subsampling_without_rep:iid} together with 
    Theorem~\ref{theorem:SP_f-SP}.
     $F$ is $(\ee, 1+ T_m^\star(-e^\ee))$--SP under $\mu_m$. 
     Thus $F \circ \SAMP_{\T_{n,m}}$
    is  $(\klog(1+ \frac{m}{n} \; (e^\ee -1)) ,\;  \frac{m}{n}   \;   (1+
    T^\star(-e^\ee)) )$--SP for $\mu_n$. 
\end{proof}



\section{Conclusion}
This paper has shows, that in the statistical adversarial model sampling can
further increase privacy significantly. To prove this result we have developed
methods to analyze the complex combination of sampling techniques and database
distributions. Compared to a worst case analysis in DP, in SP one has to
carefully keep track of all generated mixture distributions and the dependencies
in case of sampling with replacement. Our results are comparable to the results
for DP. 

We have applied our framework to property queries to get quantitative data. As a
next step one could consider other types of queries in the statistical setting.
The privacy utility tradeoff should be analyzed for sampling in more detail.
Furthermore, the effect of subsampling when composing queries needs to be
investigated.

\bibliographystyle{splncs04}
\bibliography{subsam}

\begin{thebibliography}{10}
\providecommand{\url}[1]{\texttt{#1}}
\providecommand{\urlprefix}{URL }
\providecommand{\doi}[1]{https://doi.org/#1}

\bibitem{BBG18}
Balle, B., Barthe, G., Gaboardi, M.: Privacy amplification by subsampling:
  Tight analyses via couplings and divergences. In: Advances in Neural
  Information Processing Systems. vol.~31. Curran Associates, Inc. (2018)

\bibitem{BBG20}
Balle, B., Barthe, G., Gaboardi, M.: Privacy profiles and amplification by
  subsampling. J.~Privacy and Confidentiality  \textbf{10}(1) (2020)

\bibitem{BGKS13}
Bassily, R., Groce, A., Katz, J., Smith, A.: Coupled-worlds privacy: Exploiting
  adversarial uncertainty in statistical data privacy. In: Proc.~54.~FOCS. pp.
  439--448 (2013)

\bibitem{BBG11}
Bhaskar, R., Bhowmick, A., Goyal, V., Laxman, S., Thakurta, A.: Noiseless
  database privacy. In: Proc.~ASIACRYPT 2011. pp. 215--232. Springer LNCS 7073
  (2011)

\bibitem{BR25}
Breutigam, D., Reischuk, R.: Statistical privacy. Tech. rep., arXiv 2501.12893
  (2025)

\bibitem{D21}
Desfontaines, D.: A list of real-world uses of differential privacy.
  https://desfontain.es/blog/real-world-differential-privacy.html (2021)

\bibitem{DMK20}
Desfontaines, D., Mohammadi, E., Krahmer, E., Basin, D.: Differential privacy
  with partial knowledge. Tech. rep., arXiv 1905.00650 (2020)

\bibitem{DN03}
Dinur, I., Nissim, K.: Revealing information while preserving privacy. In:
  Proc.~PODS. pp. 202--210 (2003)

\bibitem{DRS22}
Dong, J., Roth, A., Su, W.: Gaussian differential privacy. J.~Royal Statistical
  Society Series B: Statistical Methodology pp. 3--36 (2022)

\bibitem{Dwo06}
Dwork, C.: Differential privacy. In: Automata, Languages and Programming. pp.
  1--12. Springer Berlin Heidelberg, Berlin, Heidelberg (2006)

\bibitem{DR14}
Dwork, C., Roth, A.: The algorithmic foundations of differential privacy.
  Foundations and Trends{\textregistered} in Theoretical Computer Science
  \textbf{9}(3--4),  211--407 (2014)

\bibitem{IC21}
Imola, I., Chaudhuri, K.: Privacy amplification via {Bernoulli} sampling. Tech.
  rep., arXiv 2105.10594 (2021)

\bibitem{JD22}
Joseph, N., David, D.: Differential privacy: Future work \& open challenges.
  https://www.nist.gov/blogs/cybersecurity-insights/differential-privacy-future-work-open-challenges
  (1 2022), an official website of the United States government

\bibitem{KLN11}
Kasiviswanathan, S.P., Lee, H.K., Nissim, K., Raskhodnikova, S., Smith, A.:
  What can we learn privately? SIAM Journal on Computing  \textbf{40}(3),
  793--826 (2011)

\bibitem{LQS12}
Li, N., Qardaji, W., Su, D.: On sampling, anonymization, and differential
  privacy or, k-anonymization meets differential privacy. In: Proc.~ASIACCS.
  pp. 32–--33 (2012)

\bibitem{S22}
Steinke, T.: Composition of differential privacy and privacy amplification by
  subsampling. Tech. rep., arXiv 2210.00597 (2022)

\bibitem{WBK19}
Wang, Y., Balle, B., Kasiviswanathan, S.: Subsampled {Renyi} differential
  privacy and analytical moments accountant. In: Proc.~22.~AISTATS (2019)

\end{thebibliography}

\appendix

\section{Markov Kernels and Couplings}\label{section:markov_kernel_and_coupling}

Let us repeat some basic notions concerning Markov Kernels and Couplings.

\FNS{
Since many of the proof in this article rely on techniques and structures in the
field of stochastic processes we will give a short introduction of these. If we
consider any subset $A \subseteq \Rb$ of the real numbers we will alway consider
that as a measurable space with respect to the Lebesgue measure and its
Borel-$\sigma$-Algebra $\caB(S)$.
}

\begin{definition}{\rm \wfinw{Markov Kernel} \label{preliminaries:mk}\\
   Let $(X, \caX)$ and $(Y, \caY)$ be measurable spaces. 
   A function $\caK: \caY
   \times X \to [0,1]$ is called a \emph{Markov kernel} from $(X, \caX)$
   to $(Y, \caY)$ if the following properties are fulfilled:\\[-4ex]
   \begin{enumerate}
    \item For all fixed $\hat{Y} \in \caY$, the map $x \mapsto \caK(\hat{Y},x)$ is $\caX$-$\caB([0,1])$-measurable,
      where $\caB([0,1])$ denotes the Borel $\sigma$-algebra.
    \item For all fixed $x \in X$, the map $\hat{Y} \mapsto \caK(\hat{Y},x)$ is a probability measure on $(Y,\caY)$.
 \end{enumerate}

A function $\cak: Y \times X \to [0,\infty]$ is called
   the kernel density of $\caK$ with respect to a measure $\mu$ on $Y$ if
   for all $\hat{Y} \in \caY$ and $x \in X$ it holds
   \begin{equation*}
     \caK(\hat{Y},x) \gla \int_{\hat{Y}}   \cak(y,x) \; \mathbf{d}\mu(y).
   \end{equation*}
   }\end{definition}

    The Markov kernel of a measurable deterministic function $f$ is defined by its kernel
    density $\cak_f(x,y) = \hat{\delta}_{f(x)}(y)$ with respect to the Lebesgue
    measure. Thus, its Markov kernel equals
    \begin{equation*}
       \caK(\hat{Y},x) = \mathbf{1}_{\hat{Y}}(f(x)) = \mathbf{1}_{f^{-1}(\hat{Y})}(x) \ .
  \end{equation*}

    Furthermore,  for every function mapping elements of a set $X$ to random variables on a set $Y$
    there exists a Markov kernel describing its stochastic behavior,
    in particular for privacy techniques applied to a database query $F$.

For a  distribution $\mu$ on $(X,\caX)$,  a kernel $\caK: \caY \times X \to [0,1]$ from $(X, \caX)$
   to $(Y, \caY)$ acts on $\mu$ by 
   \begin{equation*}
    \mu \caK (\hat{Y}) := \int_X \caK(\hat{Y},x) \ \mathbf{d}\mu(x).
   \end{equation*}
Thus, $\mu \caK$ is a distribution on $(Y,\caY)$. 
For Markov kernels $\caK_1$ from $(X, \caX)$   to $(Y, \caY)$ and  $\caK_2$ from $(Y, \caY)$   to $(Z, \caZ)$ the composition
    $    \caK_1 \caK_2:  \caZ \times X \to [0,1]$
   is  a Markov kernel from $(X, \caX)$ to $(Z, \caZ)$.
    
 Couplings are a technique to compare two distributions $\mu$ and $\hat{\mu}$. 
These are 2-dimensional distributions whose marginal distributions are equal to $\mu$ and $\hat{\mu}$.
Of interest are couplings that maximizes the similarity of $\mu$ and $\hat{\mu}$,
 called a  maximal coupling.

   
   \begin{definition}{ \rm \bfinw{Maximal Couplings}\\
   Given two distributions $\mu$ and $\hat{\mu}$ whose total  variation equals $\ll$,
   a maximal coupling is a triple of distributions $\mu^0,\mu^1, \hat{\mu}^1$ such that
   \begin{eqnarray*}
      \mu &\gla&(1-\lambda) \;  \mu^0 + \lambda \; \mu^1 \\ 
      \hat{\mu} &\gla&(1-\lambda) \; \mu^0 + \lambda \; \hat{\mu}^1 \ .  
   \end{eqnarray*}
   }\end{definition}
   
   $\mu^0$ is the part where both distributions coincide and $ \mu^1, \hat{\mu}^1$ where
   they differ.

   A useful tool to analyze privacy is the \emph{advanced joint
   convexity} property of the $\alpha$--divergence (see \cite{BBG18}). 
   In our setting this corresponds to the  following: 
   
   \begin{lemma}{\rm \bfinw{Advanced Joint Convexity} \label{results:definition:AJK}\\
    For distributions $\mu,\hat{\mu}$ with total variation $\lambda$ and a  query $F$ 
       let $\mu^0,\mu^1,  \hat{\mu}^1$ be a maximal coupling.
       Then for  $\ee' = \log(1 + \lambda (e^\ee -1)) \approx \lambda \; \ee$   it holds
   \begin{eqnarray*}
      \DD_{F,\mu, \hat{\mu}}(\ee') &\kla& \lambda \left((1- e^{\ee' -\ee} )\; \DD_{F,\mu_1,
      \mu_0}(\ee) \; + \;  e^{\ee' -\ee} \; \DD_{F,\mu_1, \hat{\mu}_1}(\ee)  \right)
   \end{eqnarray*}
   }\end{lemma}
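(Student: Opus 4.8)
The plan is to reduce the inequality to a single pointwise algebraic identity on the output densities, after which everything follows from monotonicity of $\max(0,\cdot)$ and integration. First I would push the maximal coupling through the kernel. Since $\caK_F$ acts linearly on distributions, the splittings $\mu = (1-\lambda)\mu^0 + \lambda\mu^1$ and $\hat{\mu} = (1-\lambda)\mu^0 + \lambda\hat{\mu}^1$ transfer verbatim to the transformed densities. Writing $P := \mu\caK_F$, $P_0 := \mu^0\caK_F$, $P_1 := \mu^1\caK_F$, $Q := \hat{\mu}\caK_F$ and $Q_1 := \hat{\mu}^1\caK_F$, this gives $P = (1-\lambda)P_0 + \lambda P_1$ and $Q = (1-\lambda)P_0 + \lambda Q_1$, and the second integral form of the privacy curve reads $\DD_{F,\mu,\hat{\mu}}(\ee') = \int_A \max(0,\,P(a) - e^{\ee'}Q(a))\,\DX{a}$.

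The heart of the argument is the exact identity
\[ P - e^{\ee'}Q \;=\; \lambda\,(1-e^{\ee'-\ee})\,(P_1 - e^{\ee}P_0) \;+\; \lambda\,e^{\ee'-\ee}\,(P_1 - e^{\ee}Q_1). \]
I would verify it by expanding $P - e^{\ee'}Q = (1-\lambda)(1-e^{\ee'})P_0 + \lambda P_1 - \lambda e^{\ee'}Q_1$ and matching coefficients with the right-hand side. The $P_1$ coefficients cancel to $\lambda$ on both sides. The $Q_1$ coefficient matches because $e^{\ee'-\ee}\,e^{\ee} = e^{\ee'}$. The $P_0$ coefficient is the delicate one, and it matches precisely because of the chosen value $e^{\ee'} = 1 + \lambda(e^{\ee}-1)$: substituting this relation gives $(1-e^{\ee'-\ee})e^{\ee} = e^{\ee} - e^{\ee'} = (1-\lambda)(e^{\ee}-1)$, so that the right-hand $P_0$ coefficient $-\lambda(1-e^{\ee'-\ee})e^{\ee}$ equals $-\lambda(1-\lambda)(e^{\ee}-1)$, which is exactly the left-hand $P_0$ coefficient $(1-\lambda)(1-e^{\ee'})$. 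This simultaneous agreement of both coefficients is the step I expect to require the most care, since the whole lemma hinges on $\ee'$ being chosen to make the $P_0$ and $Q_1$ terms reconcile at once.

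Having the identity, I would observe that $e^{\ee'-\ee} \in (0,1]$ — indeed $\ee' \le \ee$, because $\lambda \le 1$ forces $1 + \lambda(e^{\ee}-1) \le e^{\ee}$ — so both weights $\lambda(1-e^{\ee'-\ee})$ and $\lambda e^{\ee'-\ee}$ are nonnegative. Applying the elementary facts $\max(0,a+b) \le \max(0,a)+\max(0,b)$ and $\max(0,ca) = c\max(0,a)$ for $c \ge 0$ to the identity yields the pointwise bound
\[ \max(0,\,P - e^{\ee'}Q) \;\le\; \lambda(1-e^{\ee'-\ee})\max(0,\,P_1 - e^{\ee}P_0) \;+\; \lambda e^{\ee'-\ee}\max(0,\,P_1 - e^{\ee}Q_1). \]
Integrating over $A$ and recognising the two integrals as $\DD_{F,\mu^1,\mu^0}(\ee)$ and $\DD_{F,\mu^1,\hat{\mu}^1}(\ee)$ delivers the claimed inequality. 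No integrability concern arises beyond those already implicit in the definition of $\DD$, and since every manipulation is an equality except the single subadditivity step, the bound is tight exactly when $P_1 - e^{\ee}P_0$ and $P_1 - e^{\ee}Q_1$ share their sign pointwise.
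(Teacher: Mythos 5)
Your proof is correct, and it is worth noting that the paper gives no proof of this lemma at all -- it imports it from Balle et al.~\cite{BBG18}. Your argument is in substance the standard one from that source, but usefully compressed into a single pointwise computation: your exact identity is precisely the equality form of advanced joint convexity, since with $\theta = e^{\ee'-\ee}$ it can be rewritten as $P - e^{\ee'}Q = \lambda\bigl(P_1 - e^{\ee}\bigl((1-\theta)P_0 + \theta Q_1\bigr)\bigr)$, i.e.\ $\DD_{F,\mu,\hat{\mu}}(\ee') = \lambda\,\DD_{F,\mu^1,(1-\theta)\mu^0+\theta\hat{\mu}^1}(\ee)$, and your single subadditivity step $\max(0,a+b)\le\max(0,a)+\max(0,b)$ plays exactly the role of joint convexity of the $\alpha$-divergence in its second argument, which is also the only inequality in the original proof. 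Your coefficient verification is right; the crucial cancellation is $e^{\ee}-e^{\ee'} = (1-\lambda)(e^{\ee}-1)$, which holds by the choice $e^{\ee'} = 1+\lambda(e^{\ee}-1)$, and pushing the coupling through $\caK_F$ is legitimate because the action $\mu\mapsto\mu\caK_F$ is linear by its integral definition. Two minor points of hygiene: nonnegativity of the weight $\lambda(1-e^{\ee'-\ee})$ requires $\ee\ge 0$ (your step ``$\lambda\le 1$ forces $1+\lambda(e^{\ee}-1)\le e^{\ee}$'' silently uses $e^{\ee}\ge 1$), which is harmless since the paper's definition of the privacy curve assumes $\ee\ge 0$; and your closing tightness remark is slightly optimistic -- equality in the subadditivity step requires the two integrands to have matching signs pointwise, whereas \cite{BBG18} characterizes when the bound is achieved via a specific worst-case choice of $\mu^0,\mu^1,\hat{\mu}^1$, a refinement neither this paper nor your argument needs.
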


   \section{Class of $F$--samplable distributions}\label{appendix_samplable_bsp}
   There are many distributions commonly appearing in real world data making it
   necessary to try to analyze a wide array of different queries and database
   distributions. 
   \KURZ{}{\paragraph{Continuous Distributions}}
   The exponential family is a set of distributions where the density function
   $\mu$, which depends on some parameter vector $p \in \Pi$, can be expressed
   using functions $T: A \to \Rb^k, g: \Pi \to \Rb,\xi: \Pi \to \Rb^k$ and
   non-negative $h: A \to \Rb$ as
   \[
       \mu(x) := g(p) h(x) e^{\xi(p) T(x)^T}.
   \]
   \begin{theorem}
       Let $F$ be a query $F$ and $\caD$ a database. When for all $v, w \in W, j \in [1:n]$ and
       $\tau in \mathfrak{C}^n$ the distributions of
       $F(\SAMP_\tau \caD_{j,v})$ and $F(\SAMP_\tau \caD_{j,w})$ are part of the
       exponential family with parameters $p, p' \in \Pi$ such that
   \[
           \sum_{i=1}^k T_i(x) (\xi_i(p) - \xi_i(p'))
   \]
       is a polynomial of degree one or less. Then $F$ if $F$-samplable with respect
       to $\caD$.
   \end{theorem}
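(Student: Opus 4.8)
The plan is to show that the set maximizing the defining integral of $F$-samplability is precisely the region where the integrand is nonnegative, and that under the exponential-family hypothesis this region is cut out by a single affine inequality in the answer $a$, hence a half-line.

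First I would fix $j$, values $v,w \in W$, two templates $\tau,\htau \in \mathfrak{C}^n$ and $\ee \ge 0$, and abbreviate the two resulting densities on $A$ by $p_1(a) \dea \mujv \caK_\tau \caK_F(a)$ and $p_2(a) \dea \mujw \caK_{\htau} \caK_F(a)$. Because the integrand appears under a $\max(0,\cdot)$, the functional $\int_S \left( p_1(a) - e^\ee p_2(a) \right) \DX{a}$ is maximized exactly by collecting all points at which $p_1(a) - e^\ee p_2(a) \ge 0$, so one may take
\[ S' \dea \left\{ a \in A \mid p_1(a) \ge e^\ee\, p_2(a) \right\}. \]
It therefore suffices to prove that $S'$ is a half-line, empty, or all of $A$.

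Next I would insert the exponential-family form. Since $p_1$ and $p_2$ lie in the same family, they share the carrier $h$ and the statistic $T$ and differ only in the parameter, say $p$ for $p_1$ and $p'$ for $p_2$; writing them as $g(p)\,h(a)\,e^{\xi(p)T(a)^{\top}}$ and $g(p')\,h(a)\,e^{\xi(p')T(a)^{\top}}$, the common factor $h(a)$ cancels in the likelihood ratio. Taking logarithms turns the condition $p_1(a) \ge e^\ee p_2(a)$ into
\[ \sum_{i=1}^{k} T_i(a)\,\bigl(\xi_i(p) - \xi_i(p')\bigr) \ \ge\ \ee - \ln\frac{g(p)}{g(p')}. \]
The right-hand side is a constant independent of $a$, and by assumption the left-hand side is a polynomial in $a$ of degree at most one, i.e.\ an affine function $c_0 + c_1 a$. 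Hence $S'$ is the solution set of a single affine inequality: if $c_1 > 0$ it is $[\hat a,\infty)$, if $c_1 < 0$ it is $(-\infty,\hat a]$, and if $c_1 = 0$ it is either all of $A$ or $\emptyset$, which matches the admissible shapes in the definition of $F$-samplability.

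The only point needing a small argument is the degenerate case $c_1 = 0$ that could yield $S' = A$. Here I would use that $p_1,p_2$ are probability densities with $\int_A p_1 = \int_A p_2 = 1$: if $p_1(a) \ge e^\ee p_2(a)$ held for all $a$ with $\ee > 0$, integration would give $1 \ge e^\ee > 1$, a contradiction, so $S' = A$ cannot occur for $\ee > 0$; and for $\ee = 0$ the densities coincide, the integrand vanishes, and $S' = \emptyset$ may be chosen without changing the integral. The main (and only mild) obstacle is thus bookkeeping rather than mathematics: ensuring that the carrier $h$ and statistic $T$ genuinely coincide across the two templates so that they cancel, and observing that the computation above applies verbatim to \emph{any} pair of parameters $p,p'$ drawn from the family, which is exactly what the two-template quantifier in the definition of $F$-samplable requires and what the hypothesis (assumed for every template and every fixed value) supplies.
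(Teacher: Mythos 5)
Your proof is correct and follows essentially the same route as the paper's: cancel the carrier $h$ in the likelihood ratio, take logarithms, and use the hypothesis that the difference of natural parameters makes the sign condition affine in $a$, so the maximizing set is a half-line, empty, or degenerate. You are in fact slightly more careful than the paper, which writes the integral with the \emph{same} template $\tau$ on both sides (whereas $F$-samplability quantifies over pairs $\tau,\htau$) and never rules out the degenerate outcome $S'=A$ in the constant case; your normalization argument ($1 \ge e^\ee$ forcing a contradiction for $\ee>0$) closes that small gap, but these are refinements of, not departures from, the published argument.
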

   \begin{proof}
       One needs to asses the set $S \sse A$ that maximizes the integral
       \[    \int_S  \mujv \caK_\tau \caK_{F}(x) - e^\ee \;
           \mujw \caK_{\tau} \caK_{F}(x)  \ \DX{x}  \]
       for all $\ee > 0$. Thus by evaluating the zero of $\mujv \caK_\tau
       \caK_{F}(x) - e^\ee \; \mujw \caK_{\tau} \caK_{F}(x)$ the structure of
       $S$ can be determined. Since the evaluated densities are in the
       exponential family this amounts to
       \begin{align*}
           0 &\gla g(p) h(x) e^{\xi(p) T(x)^T} - g(p') e^\ee h(x) e^{\xi(p') T(x)^T} \\
         \Leftrightarrow  0 &\gla \frac{g(p) e^{\sum_{i=0}^k \xi_i(p) T_i(x)} }{g(p') e^\ee e^{ \sum_{i=0}^k \xi_i(p') T_i(x)}} - 1 \\
         \Leftrightarrow  \log(1) &\gla \log \left( \frac{g(p)}{g(p') e^\ee } \prod_{i=0}^k \frac{e^{\xi_i(p) T_i(x)}}{e^{\xi_i(p') T_i(x)}} \right) \\
         \Leftrightarrow  0 &\gla \log \left( \frac{g(p)}{g(p') e^\ee } \right) + \sum_{i=0}^k T_i(x) (\xi_i(p) - \xi_i(p')). 
       \end{align*}
       When the last sum is a polynomial of degree zero the integrant is a constant
      $c$ allowing for $S = \emptyset$ depending of the sign of $c$. If the sum is
      a polynomial of degree one the last equality has one solution $x^\star$ for
      $x$ and thus $S = (- \infty, x^\star]$ or $S = [x^\star, \infty)$. 
   \end{proof}
   Since the way in which the final distribution of the query answer is
   generated this requirement needs to be evaluated for every case. 
   \KURZ{As a general Rule it holds often if the rate or variace parameters are
   fixed. This can be observed for the normal or Gamma distributions and holds
   aswell for discrete members of the exponential family like Binomial, Geometic or
   Poisson distributions. 
   }{\begin{example}
       Let $F$ be a query $F$ and $\caD$ a database. When for all $v, w \in W, j \in [1:n]$ and
       $\tau in \mathfrak{C}^n$ the distributions of
       $F(\SAMP_\tau \caD_{j,v})$ and $F(\SAMP_\tau \caD_{j,w})$ 
       \begin{itemize}
           \item have the same variance and are normal distributed.
           \item have the same rate parameter and are Gamma distributed.
       \end{itemize}
       In the first case is come from the fact that in the case of a normal
       distribution it holds $T(x) = (x^2, -x, 1)$ and $\xi((m,\sigma^2)) =
       1/\sigma^2 (1, 2m, m^2)$ therefore the $x^2$ summand cancels out since the
       variance is the same leaving a polynomial of first degree. For the Gamma
       distribution it holds $T(x) = -x$ and $\xi((\lambda, \alpha)) = \lambda$ for
       rate parameter $\ll$ and shape parameter $\alpha$. 
   \end{example}
   While it is not part of the exponential family if the query answer is Laplace
   distributed with fixed Variance as in the Normal distribution example this holds
   as well. This is shown in the proof for additive Laplace noise in \cite{BR25}.
   
   \paragraph{Discrete Distributions}
   As seen in the case for continuous distributions the property is fulfilled for
   distributions where the probability mass is centered at one point an then falls
   exponentially fast. This is mimicked by several discrete distributions like the
   Binomial, Geometric and Poisson distributions which all are part of the discrete
   exponential family. To see that query answers distributed like this are
   generated by a $F$-samplable query and fitting distribution (with the requirement
   for the Binomial distribution that the $n$ parameter is fixed) one need to
   observe, that they all depend on a parameter $p > 0$ which is the base of an
   exponentiation and the possible elements of the sample space are the exponent.
   If we then compare two such distributions with different parameters even when
   one is scaled by a constant (here $e^\ee$) since both decrease exponentially
   fast one will always end up lower then the other at one fixed point. }
   
   \paragraph{Sum Queries}
   \KURZ{Sum queries, $g_{\alpha}: \Rb^n \to \Rb$ with $g_{\alpha}(x) = \sum_{i=0}^n \alpha_i x_i$ for
   some fixed $\alpha \in \Rb^n$, are a common type of query considered in
   differential privacy. 
   \begin{theorem}
       Let $\caD$ be a database and $\alpha \in \Rb^N$ such that for all $i \in
       [1:n]$ the random variables $\alpha_i \caD_i$ are identically distributed
       and they are all either \emph{Binomial, Poisson, Normal} or  \emph{Cauchy}
       distributed then $g_\alpha$ if $F$--samplable with respect to $\caD$
   \end{theorem}
   \begin{proof}
       All distributions from the proposition are closed with respect to summation.
       And the first three are members of the exponential family where the
       problematic parameters are fixed. For Cauchy distributed answers the zero of
       the difference of the densities to $Cau(a, \gamma)$ and $Cau(a', \gamma)$
       needs to be evaluated.
       \begin{align*}
           0 &\gla \frac{1}{\pi} \left( \frac{\gamma}{(x-a')^2 + (\gamma)^2} - \frac{ e^\ee \gamma}{(x-a)^2 + (\gamma)^2} \right) \\
         \Leftrightarrow  1 &\gla \frac{(x-a)^2 + \gamma^2}{(x-a')^2 + \gamma^2}
       \end{align*}
       The second equality holds iff \[ (x-a)^2 = (x-a')^2 \] solving for $x^\star
       = ((a')^2 - a^2)/2(a'-a)$ and thus $S = (- \infty, x^\star]$ or $S = [x^\star, \infty)$. 
   \end{proof}
   }{A type of query often considered in differential Privacy are sum queries 
   $g_{\alpha}: \Rb^n \to \Rb$ with $g_{\alpha}(x) = \sum_{i=0}^n \alpha_i x_i$ for
   some fixed $\alpha \in \Rb^n$. One recognizes that there are choices for $\alpha$
   where these queries might fulfill neither DP or SP. But many queries like the
   mean fall into this class of queries. 
   While this motivates why one should analyze this type of query it has
   additionally useful mathematical properties since the sum of two distributions
   is evaluated by their convolution which are well studied. 
   \begin{theorem}
       Let $\caD$ be a database and $\alpha \in \Rb^N$ such that for all $i \in
       [1:n]$ the random variables $\alpha_i \caD_i$ are identically distributed
       and they are all either \emph{Binomial, Poisson, Normal} or  \emph{Cauchy}
       distributed then $g_\alpha$ if $F$--samplable with respect to $\caD$
   \end{theorem}
   \begin{proof}
       Since for all of the distributions in the list we know, that sums of random
       variables distributed like that are again of the same distribution we only
       need to analyze the answer distribution. For the first three points of the
       list this was already discussed. Thus we only need to determine the set $S
       \sse A$ that maximizes the integral
       \[    \int_S  \mujv \caK_\tau \caK_{g_{\SAMP_\tau(\alpha)}}(x) - e^\ee \;
           \mujw \caK_{\tau} \caK_{g_{\SAMP_\tau(\alpha)}}(x)  \ \DX{x}  \] 
           for all $\ee > 0$ for Cauchy distributed answers. We again do this by
           evaluating the zero assuming the answers are $Cau(a, \gamma)$ and
           $Cau(a', \gamma)$ distributed (since the two answers are of the same
           distribution and differ only by their mean we consider the case where
           their variance is equal). 
       \begin{align*}
           0 &\gla \frac{1}{\pi} \left( \frac{\gamma}{(x-a')^2 + (\gamma)^2} - \frac{ e^\ee \gamma}{(x-a)^2 + (\gamma)^2} \right) \\
         \Leftrightarrow  1 &\gla \frac{(x-a)^2 + \gamma^2}{(x-a')^2 + \gamma^2}
       \end{align*}
       The second equality holds iff \[ (x-a)^2 = (x-a')^2 \] solving for $x^\star
       = ((a')^2 - a^2)/2(a'-a)$ and thus $S = (- \infty, x^\star]$ or $S = [x^\star, \infty)$. 
   \end{proof}
   
   In addition to these cases where the distribution of the sum is known the
   central limit theorem implies that for big enough databases the answer
   distribution converges weakly against a normal distribution and thus the
   probability mass which leads to a structure of the set $S$ such that it does not
   fulfill the requirements stated in $F$--sampeble should converge to $0$. This
   behavior and boarder the influence this convergence behavior has of the
   different divergences used in differential privacy should be focus of further
   research. 
   }
   
   \KURZ{\section{Poisson Sampling in Differential
   Privacy}\label{appendix:poission_DP_vs_SP}
   Sampling for differential privacy is well studied (see among others
   \cite{LQS12,BBG18,BBG20}), the classic results state, that when sampling with
   rate $\ll$ without replacement a mechanisms fulfilling $(\ee,\dd)$--DP becomes
   $(\log(1+\ll (e^\ee -1)),\ll \;\dd)$--DP. In difference to the bounds for
   Poisson sampling in \cite{BBG18} we propose a new bound for the setting of 
   
   \begin{theorem}
       Let $F$ be a query and $M$ some privacy technique achieving DP for a privacy
       curve $\dd(\ee)$. Then the privacy technique $M(F, \SAMP_\T(\cdot))$ where $\T$ is a Poisson
       sampling technique with sampling rate $\ll$ achieves DP for the privacy
       curve
       \[ \dd^\star(\ee) = \sum^n_{m=0} ll^m (1-\ll)^{n-m} \binom{n}{m} m/n
       \dd(\log(1+ n/m (e^\ee
       -1))). \]
   \end{theorem}
   
   \begin{proof}
       We follow our proof in \ref{theorem:subsampling:poisson} for our second
       bound. Here $\caK_{M,F}$ defines the Markov kernel corresponding to the
       application of $M(F, \cdot)$. Thus for any $D,D' \in W^\star$ with $D
       \approx_c D'$ where $\mu_D$ and $\mu_{D'}$ define their point distributions
       we can represent their privacy curve $d(\ee) = \DD_{(F,M),\mu_D \caK_\T, \mu_{D'} \caK_\T}(\ee)$.
       This allows for the same steps to split along the database size as in the SP proof leaving us with 
       \[ d(\ee) \leq \sum^n_{m=0} ll^m (1-\ll)^{n-m} \binom{n}{m} \underbrace{ \Max{S \subseteq A} \int_S \mu_D \caK_{\T^m} \caK_{M,F}(x)  - e^\ee
       \mu_{D'} \caK_{\T^m} \caK_{M,F}(x) \DX{x}}_{(A)}. \]
       As for SP one recognizes that $(A)$ corresponds to the sampling without
       replacement technique applied to $M$ and $F$ with a sample size of $m$.
       Bounding $(A)$ by choosing the maximum over all neighboring databases and
       applying the amplification bound for sampling without replacement on
       databases that differ in one entry stated in \cite{BBG20} we arrive at the
       Bound stated in the theorem. 
   \end{proof}
   
   }{\section{Comparison DP and SP for Poisson
   sampling}\label{appendix:poission_DP_vs_SP}
   In \cite{LQS12} a tight result for the effect of Poisson sampling in
   differential privacy was given and later reformulated in \cite{BBG20}. This
   considers for some database space $W^\star$ the neighborhood relationship
   $\approx_r$ for $D, D' \in W^\star$ as $D \approx_r D'$ iff $\vert D \vert - 1 =
   \vert D' \vert$ and $D' \subset D$ or $\vert D' \vert - 1 =
   \vert D \vert $ and $D \subset D'$, meaning two databases are neighboring iff
   one can be created by removing one entry from the other. This is equivalent to
   the common notion where two database are considered neighboring iff they only
   differ in one entry.
   
   \begin{theorem}
       Let $F$ be a query and $M$ some privacy technique achieving $(\ee,\dd)$--DP.
       Then the privacy technique $M(F, \SAMP_\T(\cdot))$ where $\T$ is a Poisson
       sampling technique with sampling rate $\ll$ achieves $(\log(1+\ll (e^\ee
       -1)),\ll \;\dd)$--DP.
   \end{theorem}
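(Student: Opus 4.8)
The plan is to reduce the statement to a single pair of $\approx_r$-neighbors and then feed the resulting mixture decomposition into the advanced joint convexity tool of Lemma~\ref{results:definition:AJK}. Write $\caK_{M,F}$ for the Markov kernel of $M(F,\cdot)$ and, for a database $D$, let $\mu_D$ be its point distribution. Since $(\ee',\ll\dd)$-DP is equivalent to $\DD_{(F,M),\mu_D\caK_\T,\mu_{D'}\caK_\T}(\ee')\kla\ll\dd$ for every ordered neighboring pair $D\approx_r D'$ (with $\ee'=\log(1+\ll(e^\ee-1))$), it suffices to establish this bound. By definition of $\approx_r$ I may assume $D=D'\cup\{x^\star\}$ for a single extra entry $x^\star$, the symmetric case $D'=D\cup\{x^\star\}$ being handled by swapping roles.

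First I would couple the two Poisson processes so that they make the same inclusion decisions on the shared entries of $D'$. Under this coupling $\SAMP_\T(D')$ is always the common sample $T$, while $\SAMP_\T(D)$ equals $T$ with probability $1-\ll$ (when $x^\star$ is not drawn) and $T\cup\{x^\star\}$ with probability $\ll$. Pushing both through $\caK_{M,F}$ and averaging over $T$ yields the output distributions
\[ \mu_D\caK_\T\caK_{M,F}=(1-\ll)\,\mu_0+\ll\,\mu_1,\qquad \mu_{D'}\caK_\T\caK_{M,F}=\mu_0, \]
where $\mu_0=\E_T[\,M(F,T)\,]$ and $\mu_1=\E_T[\,M(F,T\cup\{x^\star\})\,]$. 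The key point is that for every fixed $T$ the databases $T$ and $T\cup\{x^\star\}$ are $\approx_r$-neighbors, so $M$ guarantees $(\ee,\dd)$-DP between them in both directions; joint convexity of the divergence $\DD$ then lifts this to $\DD_{(F,M),\mu_1,\mu_0}(\ee)\kla\dd$ and $\DD_{(F,M),\mu_0,\mu_1}(\ee)\kla\dd$.

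Now I would invoke advanced joint convexity. Writing $\mu_{D'}\caK_\T\caK_{M,F}=\mu_0=(1-\ll)\mu_0+\ll\mu_0$ puts the two output distributions in the maximal-coupling form required by Lemma~\ref{results:definition:AJK}, with common part $\mu_0$, differing parts $\mu_1$ and $\mu_0$, and weight $\ll$. The lemma then gives
\[ \DD_{(F,M),\mu_D\caK_\T,\mu_{D'}\caK_\T}(\ee')\kla\ll\big((1-e^{\ee'-\ee})\DD_{(F,M),\mu_1,\mu_0}(\ee)+e^{\ee'-\ee}\DD_{(F,M),\mu_1,\mu_0}(\ee)\big)=\ll\,\DD_{(F,M),\mu_1,\mu_0}(\ee)\kla\ll\dd. \]
For the reverse direction I would instead assign the differing parts as $\mu_0$ (for $\mu_{D'}$) and $\mu_1$ (for $\mu_D$), so the first summand carries $\DD_{(F,M),\mu_0,\mu_0}(\ee)=0$ and the lemma yields the bound $\ll\,e^{\ee'-\ee}\DD_{(F,M),\mu_0,\mu_1}(\ee)\kla\ll\dd$, using that $\ll\le1$ forces $\ee'\le\ee$ and hence $e^{\ee'-\ee}\le1$. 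Together these two inequalities give $(\ee',\ll\dd)$-DP.

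The main obstacle I anticipate is checking that both decompositions genuinely satisfy the hypotheses of Lemma~\ref{results:definition:AJK}: the total variation of the coupled outputs is $\ll\cdot\mathrm{TV}(\mu_0,\mu_1)$, which may be strictly below $\ll$, so I must confirm the lemma needs only a valid $(1-\ll,\ll)$-mixture decomposition rather than total variation exactly $\ll$, and that the degenerate choice $\hat\mu_1=\mu_0$ (resp.\ $\mu_1=\mu_0$) is admissible there. The remaining work is routine bookkeeping: verifying that joint convexity of $\DD$ transfers the per-$T$ DP guarantee to the averaged distributions $\mu_0,\mu_1$, and that the symmetric neighbor case follows by interchanging $D$ and $D'$.
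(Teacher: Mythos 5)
Your proof is correct, but there is nothing in the paper to compare it against: the paper states this theorem without proof, recalling it as the known amplification result of \cite{LQS12} as reformulated in \cite{BBG20}; the only Poisson-sampling proof in that appendix belongs to the second, different bound $\dd^\star(\ee)$. What you have produced is a self-contained reconstruction of the standard Balle et al.\ argument from the paper's own toolbox, and it is sound. Conditioning on whether the extra entry $x^\star$ is drawn gives exactly the mixtures $\mu_D\caK_\T\caK_{M,F}=(1-\ll)\mu_0+\ll\mu_1$ and $\mu_{D'}\caK_\T\caK_{M,F}=\mu_0$ (independence of the inclusion decisions is what makes $\mu_0$ the same object in both lines); for every realization $T$ the databases $T$ and $T\cup\{x^\star\}$ are $\approx_r$-neighbors, so the $(\ee,\dd)$ guarantee of $M$ averages -- formally the same computation as Lemma~\ref{lemma:bound_by_coupling}, applied with the coupling that matches the inclusion decisions on the shared entries -- to $\DD_{(F,M),\mu_1,\mu_0}(\ee)\le\dd$ and $\DD_{(F,M),\mu_0,\mu_1}(\ee)\le\dd$; and both instantiations of Lemma~\ref{results:definition:AJK}, including the use of $e^{\ee'-\ee}\le1$ in the reverse direction, match the lemma's two summands correctly. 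The caveat you flagged yourself is the only genuine gap, and it closes: the lemma is phrased for a maximal coupling (weight equal to the total variation), whereas you apply it with weight $\ll$, which may exceed the true total variation, and with the degenerate part $\hat{\mu}_1=\mu_0$. Maximality is never used in the proof of advanced joint convexity. For any two mixtures $\mu=(1-\ll)\mu_0+\ll\mu_1$ and $\hat{\mu}=(1-\ll)\mu_0+\ll\hat{\mu}_1$ and $e^{\ee'}=1+\ll(e^\ee-1)$ one has the pointwise identity
\begin{align*}
\mu - e^{\ee'}\hat{\mu}
&= \ll\left(\mu_1 - e^{\ee'}\hat{\mu}_1-(e^\ee-e^{\ee'})\,\mu_0\right)\\
&= \ll\left((1-e^{\ee'-\ee})\left(\mu_1-e^\ee\mu_0\right) + e^{\ee'-\ee}\left(\mu_1-e^\ee\hat{\mu}_1\right)\right),
\end{align*}
and since $\ll\le1$ makes both coefficients nonnegative, applying $\max(0,\cdot)$ (subadditive and positively homogeneous) and integrating over $A$ yields the lemma's inequality for this not-necessarily-maximal decomposition. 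With that one-line verification inserted, your proof is complete.
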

   Taking the assumption in DP into account, that the size of $\dd$ is independent
   of the size of the examined database. 
   
   As our Simulations showed there may be instances in which our second bound is
   stronger using the similarities of SP and DP the following Theorem shows a new
   bound for Poisson sampling under the classic neighborhood definition where two
   databases $D, D' \in W^n$ are $D \approx_c D'$ iff $\exists! j \in [1:n]$
   s.t. $D_i \neq D_i'$.
   
   \begin{theorem}
       Let $F$ be a query and $M$ some privacy technique achieving DP for a privacy
       curve $\dd(\ee)$. Then the privacy technique $M(F, \SAMP_\T(\cdot))$ where $\T$ is a Poisson
       sampling technique with sampling rate $\ll$ achieves DP for the privacy
       curve
       \[ \dd^\star(\ee) = \sum^n_{m=0} ll^m (1-\ll)^{n-m} \binom{n}{m} m/n
       \dd(\log(1+ n/m (e^\ee
       -1))). \]
   \end{theorem}
   \KURZ{\begin{proof}
       We follow our proof in \ref{theorem:subsampling:poisson} for our second
       bound. Here $\caK_{M,F}$ defines the Markov kernel corresponding to the
       application of $M(F, \cdot)$. Thus for any $D,D' \in W^\star$ with $D
       \approx_c D'$ where $\mu_D$ and $\mu_{D'}$ define their point distributions
       we can represent their privacy curve 
       \[ d(\ee) = \DD_{}
        \Max{S \subseteq A} \int_S \mu_D \caK_\T \caK_{M,F}(x)  - e^\ee
       \mu_{D'} \caK_\T \caK_{M,F}(x) \DX{x} \]
       This allows for the same steps to split along the database size as in the SP proof leaving us with 
       \[ d(\ee) \leq \sum^n_{m=0} ll^m (1-\ll)^{n-m} \binom{n}{m} \underbrace{ \Max{S \subseteq A} \int_S \mu_D \caK_{\T^m} \caK_{M,F}(x)  - e^\ee
       \mu_{D'} \caK_{\T^m} \caK_{M,F}(x) \DX{x}}_{(A)}. \]
       As for SP one recognizes that $(A)$ corresponds to the sampling without
       replacement technique applied to $M$ and $F$ with a sample size of $m$.
       Bounding $(A)$ by choosing the maximum over all neighboring databases and
       applying the amplification bound for sampling without replacement on
       databases that differ in one entry stated in \cite{BBG20} we arrive at the
       Bound stated in the theorem. 
   \end{proof}}{\begin{proof}
       We follow our proof in \ref{theorem:subsampling:poisson} for our second
       bound. Here $\caK_{M,F}$ defines the Markov kernel corresponding to the
       application of $M(F, \cdot)$. Thus for any $D,D' \in W^\star$ with $D
       \approx_c D'$ where $\mu_D$ and $\mu_{D'}$ define their point distributions
       we can represent their privacy curve 
       \[ d(\ee) = \Max{S \subseteq A} \int_S \mu_D \caK_\T \caK_{M,F}(x)  - e^\ee
       \mu_{D'} \caK_\T \caK_{M,F}(x) \DX{x} \]
       This allows for the same steps to split along the database size as in the SP proof leaving us with 
       \[ d(\ee) \leq \sum^n_{m=0} ll^m (1-\ll)^{n-m} \binom{n}{m} \underbrace{ \Max{S \subseteq A} \int_S \mu_D \caK_{\T^m} \caK_{M,F}(x)  - e^\ee
       \mu_{D'} \caK_{\T^m} \caK_{M,F}(x) \DX{x}}_{(A)}. \]
       As for SP one recognizes that $(A)$ corresponds to the sampling without
       replacement technique applied to $M$ and $F$ with a sample size of $m$.
       Bounding $(A)$ by choosing the maximum over all neighboring databases and
       applying the amplification bound for sampling without replacement on
       databases that differ in one entry stated in \cite{BBG20} we arrive at the
       Bound stated in the theorem. 
   \end{proof}}

   }
   
   \end{document}